\crefname{section}{Sec.}{Sec.}
\crefname{algorithm}{Alg.}{Alg.}
\crefname{figure}{Fig.}{Fig.}
\crefname{proposition}{Prop.}{Prop.}
\crefname{table}{Table}{Tables}
\crefname{definition}{Def.}{Def.}
\crefname{theorem}{Thm.}{Thm.}
\renewcommand{\paragraph}[1]{\medskip \noindent {\bf #1.}}
\newcommand{\abv}{\textsc{ABV}}
\newcommand{\qf}{\textsc{QF}}
\newcommand{\sat}{\textsc{sat}}
\newcommand{\timeout}{\textsc{timeout}}
\newcommand{\unsat}{\textsc{unsat}}
\newcommand{\unknown}{\textsc{unknown}}
\newcommand{\sic}{\textsc{sic}}
\newcommand{\wic}{\textsc{wic}}
\newcommand{\A}{\mathcal{A}}
\newcommand{\B}{\mathcal{B}}
\newcommand{\I}{\mathcal{I}}
\newcommand{\J}{\mathcal{J}}
\newcommand{\Q}{\mathcal{Q}}
\newcommand{\R}{\mathcal{R}}
\newcommand{\T}{\mathcal{T}}
\newcommand{\X}{\mathcal{X}}
\renewcommand{\S}{\mathcal{S}}
\newcommand{\binsec}{\textsc{Binsec}}
\newcommand{\coq}{Coq}
\newcommand{\tfml}{\textsc{Tfml}}
\newcommand{\overall}{solveQ}
\newcommand{\generic}{inferSIC}
\newcommand{\theory}{theorySIC}
\newcommand{\solve}{solveQF}
\newcommand{\code}[1]{\texttt{#1}}
\newcommand{\taint}[1]{#1^\bullet}
\newcommand{\set}[1]{\boldsymbol{#1}}
\newcommand{\val}[1]{\mathtt{#1}}
\newcommand{\sem}[1]{\llbracket #1 \rrbracket}
\newcommand{\smtar}[2]{\mbox{Array}\,{#1}\,{#2}}
\newcommand{\smtbv}[1]{\mbox{BitVec}\,{#1}}
\newcommand{\smtbl}{\mbox{Bool}}
\newcommand{\ite}{\mbox{ite}}
\newcommand{\concat}{\mbox{concat}}
\newcommand{\extract}{\mbox{extract}}
\newcommand{\blnot}{\neg}
\newcommand{\blimply}{\Rightarrow}
\newcommand{\bland}{\wedge}
\newcommand{\blor}{\vee}
\newcommand{\blxor}{\otimes}
\newcommand{\bvnot}{\neg}
\newcommand{\bvor}{\vee}
\newcommand{\bvand}{\wedge}
\newcommand{\bvadd}{+}
\newcommand{\bvsub}{-}
\newcommand{\bvmul}{\times}
\newcommand{\bvdiv}{\div}
\newcommand{\bvshl}{\ll}
\newcommand{\bvshr}{\gg}
\newcommand{\bvlt}{<}
\newcommand{\bvgt}{>}
\newcommand{\bvle}{\leq}
\newcommand{\bvge}{\geq}
\newcommand{\select}{\mbox{select}}
\newcommand{\store}{\mbox{store}}
\newcommand{\smtlib}{\textsc{SMT-LIB}}
\title{Model Generation for Quantified Formulas: \\ A Taint-Based Approach}
\author{
  Benjamin~Farinier\inst{1,2} \and
  Sébastien~Bardin\inst{1} \and\\
  Richard~Bonichon\inst{1} \and
  Marie-Laure~Potet\inst{2}
}
\institute{
  CEA, LIST, Software Safety and Security Lab, Université Paris-Saclay, France
  \\\email{firstname.lastname@cea.fr}
  \and Univ.\ Grenoble Alpes, Verimag, France
  \\\email{firstname.lastname@univ-grenoble-alpes.fr}
}
\begin{document}

\maketitle

\begin{abstract}\vspace{-8pt}
  We focus in this paper on generating models of quantified first-order formulas
  over built-in theories, which is paramount in software verification and bug
  finding.
  While standard methods are either geared toward proving the absence of
  solution
  %
  %
  or targeted to specific theories,
  %
  %
  %
  we propose a generic approach  based on a reduction to the quantifier-free
  case.
  %
  %
  %
  Our technique
  %
  %
  allows thus to reuse all the efficient machinery developed for that context.
  Experiments show a substantial improvement over  state-of-the-art methods.
  %
  %
  %
  %
\end{abstract}
\keywords{automated reasoning; first-order logic; model generation}

\section{Introduction}

\paragraph{Context}
Software verification methods have come to rely increasingly on reasoning over
logical formulas modulo theory.
%
In particular, the ability to generate  models  (i.e., find solutions)   of a
formula is of utmost importance, typically in the context of bug finding or
intensive testing --- symbolic execution \cite{GLM12} or bounded model checking
\cite{B09}.
%
%
%
Since {\it quantifier-free first-order formulas} on well-suited theories are
sufficient to represent many reachability properties of interest, the
Satisfiability Modulo Theory (SMT) \cite{BSST09,KS08} community has primarily
dedicated itself to designing solvers able to efficiently handle such problems.

Yet, universal quantifiers are sometimes needed, typically when considering
preconditions or code abstraction.
%
%
Unfortunately, most theories handled by SMT-solvers are undecidable in the
presence of universal quantifiers.
There exist dedicated methods for a few decidable quantified theories, such as
Presburger arithmetic \cite{BKRW11} or the array property fragment \cite{BMS06},
but there is no general and effective enough approach for the model generation
problem  over universally quantified formulas.
%
%
Indeed, generic solutions for quantified formulas involving heuristic
instantiation and refutation are geared at proving the unsatisfiability of
a formula (i.e., absence of solution) \cite{MB07,GM09}, while recent proposals
such as local theory extensions \cite{BRKBW15}, finite instantiation
\cite{RTGK13,RTGKDB13} or model-based instantiation \cite{RDKTB15,GM09} either
are too narrow in scope, or handle quantifiers on free sorts only, or restrict
themselves to finite models, or may get stuck in infinite refinement loops.

\paragraph{Goal and challenge} Our goal is to propose a generic and efficient
approach to the model generation problem over
arbitrary quantified formulas  with support for theories commonly found in
software verification.
Due to the huge effort made by the community to produce state-of-the-art solvers
for quantifier-free theories (\emph{\qf-solvers}), it is highly desirable for
this solution to be compatible with current leading decision procedures, namely
SMT approaches.

\paragraph{Proposal}
%
%
Our approach turns a quantified formula into a quantifier-free formula with the
guarantee that any model of the later contains a model of the former.
The benefits are threefold: the transformed formula is  easier to solve,
it can be sent to  standard \qf-solvers,  and  a model for the initial formula
is deducible from a model of the transformed one.
The idea is to ignore quantifiers but strengthen the quantifier-free part of the
formula with an  \emph{independence condition} constraining models to be
independent from the (initially) quantified variables.
%
%
%
%

\paragraph{Contributions} This paper makes the following contributions:
\begin{description}[topsep=0pt, leftmargin=.4cm]
  \item We propose a novel and generic framework for model generation of
    quantified formula (\cref{sec:generic}, \cref{algorithm:overall})  relying
    on the inference of \emph{sufficient independence condition}
    (\Cref{section:preliminary}).
    We prove its \emph{correctness} (\Cref{prop:algo_correct}, mechanized in
    \coq) and its \emph{efficiency} under reasonable assumptions
    (\Cref{prop:size_bound,prop:complexity_bound}).
    Especially our approach implies only a linear overhead in the formula size.
    We also briefly study its \emph{completeness}, related to the notion of {\it
    weakest independence condition}.
  \item We define a taint-based procedure for the inference of independence
    conditions (\Cref{section:taint}), composed of a theory-independent core
    (\Cref{algorithm:taint:generic}) together with theory-dependent refinements.
    We propose such refinements for a large class of operators
    (\Cref{sec:taint:rabsorbant}), encompassing notably arrays and bitvectors.
  \item Finally, we present a concrete implementation of our method specialized
    on arrays and bitvectors (\Cref{section:benchmarks}).
    %
    \todo{plus assertif : c important, on fait des vrais xps de cas, Z3 \&co ne
      savent pas faire, on complemente, boolector meilleur, etc. quantified
      reachability}
    Experiments on \smtlib\ benchmarks and software verification problems
    demonstrate that we are able not only to very effectively lift
    quantifier-free decision procedures to the quantified case,
    %
    %
    but also to supplement recent advances, such as finite or model-based
    quantifier instantiation \cite{RTGK13,RTGKDB13,RDKTB15,GM09}.
    Indeed, we concretely supply SMT solvers with the ability to efficiently
    address an extended set of software verification questions.
\end{description}

\paragraph{Discussions}
%
%
\todo{a mettre en avant : 1-completely new way of dealing with quantif
(complementaire de other ways, open perspective), 2- apply it on a hard problem
(quantif arrays et uf) sur des cas realistes (quantif reach) : ca assure.  3-
pour le moment : bien sur ``containers'' arry, ite, uf. Next : ``data''}
Our approach supplements state-of-the-art model generation on quantified
formulas by providing a more generic handling of satisfiable problems.
We can deal with quantifiers on any sort and we are not restricted to finite
models.
Moreover, this is a lightweight preprocessing approach requiring a single call
to the underlying quantifier-free solver.
%
%
%
%
The method also extends to \emph{partial} elimination of universal quantifiers,
or reduction to \emph{quantified-but-decidable} formulas
(\Cref{section:extensions}).
While techniques \emph{a la} E-matching  allow to lift quantifier-free solvers
to the unsatisfiability checking of quantified formulas, this works provides a
mechanism to lift them to the satisfiability checking and model generation of
quantified formulas, yielding a more symmetric handling of quantified formulas
in SMT.
This new approach paves the way to future developments such as the
definition of more precise inference mechanisms of independence conditions, the
identification of subclasses for which inferring weakest
independence conditions is feasible, and the combination with other quantifier
instantiation techniques.

%
%
%

\section{Motivation} \label{sec:motivation}

Let us take the code sample in \Cref{fig:motivation} and suppose we want to
reach  function \code{analyze\_me}.
%
%
For this purpose, we need a model (a.k.a., solution) of the reachability
condition $\phi \triangleq ax+b>0$, where $a$, $b$ and $x$ are symbolic
variables associated to the program variables \code{a}, \code{b} and \code{x}.
However, while the values of \code{a} and \code{b} are user-controlled, the
value of \code{x} is not.
Therefore if we want to reach \code{analyze\_me} in a reproducible manner,
we actually need a model of $\phi_{\forall} \triangleq \forall{x}.ax+b>0$, which
\emph{involves universal quantification}.
While this specific formula is simple,
%
%
%
model generation for quantified formulas is notoriously difficult:
PSPACE-complete for booleans, undecidable for uninterpreted functions or arrays.

\vspace{-\intextsep}
\begin{figure}
  \begin{minipage}{0.4\textwidth}
    \begin{lstlisting}
int main () {
  int a = input ();
  int b = input ();

  int x = rand ();

  if (a * x + b > 0) {
    analyze_me();
  }
  else {
    ...;
  }
}
    \end{lstlisting}
  \end{minipage}
  \begin{minipage}{0.6\textwidth}
    \[
      \begin{array}{rl}
        \multicolumn{2}{l}{\mbox{\bf Quantified reachability condition}}\smallskip\\
        (1) & \forall{x}.{ax+b>0}\medskip\\
        \multicolumn{2}{l}{\mbox{{\bf Taint variable constraint}}}\\
        (2) &
        \taint{a}
        \wedge\taint{b}
        \wedge\neg\left(\taint{x}\right)
        \hfill(\taint{a},\taint{b},\taint{x}:\mbox{fresh boolean})
        \medskip\\
        \multicolumn{2}{l}{\mbox{\bf Independence condition}}\smallskip\\
        (3) &
        \left(\left(\taint{a}\wedge\taint{x}\right)
        \vee\left(\taint{a}\wedge{a=0}\right)
        \vee\left(\taint{x}\wedge{x=0}\right)\right)\wedge\taint{b}
        \smallskip\\
        (4) &
        \left(\left(\top\wedge\bot\right)
        \vee\left(\top\wedge{a=0}\right)
        \vee\left(\bot\wedge{x=0}\right)\right)\wedge\top
        \smallskip\\
        (5) & {a=0}\medskip\\
        \multicolumn{2}{l}{\mbox{\bf Quantifier-free approximation of (1)}}\smallskip\\
        (6) & \left(ax+b>0\right)\wedge\left(a=0\right)\\
      \end{array}
    \]
  \end{minipage}
  \caption{Motivating example}
  \label{fig:motivation}
\end{figure}
\vspace{-\intextsep}

\paragraph{Reduction to the quantifier-free case through independence}
%
%
We propose to ignore the universal quantification over $x$, but \emph{restrict
models to those which do not depend on $x$}.
For example, model $\{\val{a}=1, \val{x}=1, \val{b}=0\}$ does depend on $x$, as
taking $x=0$ invalidates the formula, while model $\left\lbrace\val{a}=0,
\val{x}=1, \val{b}=1\right\rbrace$ is \emph{independent of} $x$.
We call constraint $\psi \triangleq (a=0)$ an \emph{independence condition}: any
interpretation of $\phi$ satisfying $\psi$ will be independent of $x$, and
therefore a model of $\phi \wedge \psi$ will give us a model of
$\phi_{\forall}$.
%

\paragraph{Inference of independence conditions through tainting}
\Cref{fig:motivation} details in its right part a way to infer such independence
conditions.
%
%
Given a quantified reachability condition~(1), we first associate to every
variable $v$ a (boolean) \emph{taint variable} $\taint{v}$ indicating whether
the solution may depend on $v$ (value $\top$) or not (value $\bot$).
%
%
%
Here, $\taint{x}$ is set to $\bot$,  $\taint{a}$ and $\taint{b}$ are set to
$\top$ (2).
%
%
An independence condition~(3) --- a formula modulo theory ---   is then
constructed using both initial and taint variables.
We extend  taint constraints to  terms,  $\taint{t}$ indicating here whether $t$
may depend on $x$ or not, and we  require  the top-level term (i.e., the
formula) to be tainted to $\top$ (i.e., to be indep.~from $x$).
Condition (3) reads as follows: in order to enforce that $\taint{(ax+b>0)}$
holds, we enforce that $\taint{(ax)}$ and $\taint{b}$ hold, and for
$\taint{(ax)}$ we require that either $\taint{a}$ and $\taint{x}$ hold, or
$\taint{a}$ holds and $a=0$ (absorbing the value of $x$), or the symmetric case.
We see that $\taint{\cdot}$ is defined recursively and combines a {\it
systematic part} (if $\taint{t}$ holds then $\taint{f(t)}$ holds, for any $f$)
with a {\it theory-dependent part} (here, based on $\times$).
%
%
%
%
After simplifications~(4), we obtain $a=0$ as an independence condition~(5)
which  is adjoined to the reachability condition freed of its universal
quantification~(6).
A \qf-solver provides a model of (6) (e.g., $\left\lbrace
\val{a}=0,\val{b}=1,\val{x}=5 \right\rbrace$), lifted into a model of (1) by
discarding the valuation of $x$ (e.g., $\left\lbrace \val{a}=0,\val{b}=1
\right\rbrace$).

In this specific example the inferred independence condition (5) is the most
generic one and (1) and (6) are equisatisfiable.
Yet, in general  it may be an under-approximation, constraining the variables
more than needed and yielding a correct but incomplete decision method: a model
of (6) can still be turned into a model of (1), but (6) might not have a model
while (1) has.

\section{Notations}

We consider the framework of  many-sorted first-order logic with equality, and
we assume  standard definitions of sorts, signatures and terms.
%
%
%
%
%
%
%
%
Given a tuple of variables $\set{x} \triangleq \left(x_1,\dots,x_n\right)$ and a
quantifier $\Q$ ($\forall$ or $\exists$), we shorten $\Q{x_1}\dots\Q{x_n}.\Phi$
as $\Q\set{x}.\Phi$.
A formula is in \emph{prenex normal form}
%
%
if it is written as $\Q_1{\set{x}_1}\dots\Q_n{\set{x}_n}.\Phi$ with $\Phi$  a
quantifier-free formula.
%
%
A formula is in \emph{Skolem normal form} if it is in prenex normal form with
only universal quantifiers.
We write $\Phi\left(\set{x}\right)$ to denote that the free variables of  $\Phi$
are in $\set{x}$.
Let $\set{t}\triangleq\left(t_1,\dots,t_n\right)$ be a term tuple, we write
$\Phi\left(\set{t}\right)$ for the formula obtained from $\Phi$ by replacing
each occurrence of $x_i$ in $\Phi$ by $t_i$.
%
%
An {\it interpretation} $\I$ associates a domain to each sort of a signature and
a value to each symbol of a formula, and $\sem{\Delta}_{\I}$ denotes the
evaluation of term $\Delta$ over $\I$.
A {\it satisfiability relation} $\models$  between interpretations and formulas
is defined inductively as usual.
A {\it model} of $\Phi$ is an interpretation $\I$ satisfying $\I \models \Phi$.
We sometimes refer to models as ``solutions''.
Formula $\Psi$ \emph{entails}  formula $\Phi$, written $\Psi\models\Phi$, if
every interpretation satisfying $\Psi$ satisfies $\Phi$ as well.
Two formula are equivalent, denoted $\Psi \equiv \Phi$, if they have the same
models.
%
%
%
%
A {\it theory}  $\T\triangleq\left(\Sigma,\set\I\right)$ restricts symbols in
$\Sigma$ to be interpreted in $\set\I$.
The quantifier-free fragment of $\T$ is denoted \qf-$\T$.

%
%

\paragraph{Convention}
Letters $a,b,c\dots$ denote uninterpreted symbols and variables.
Letters $x,y,z\dots$ denote quantified variables.
$\set{a},\set{b},\set{c}$ denote sets of uninterpreted symbols.
$\set{x},\set{y},\set{z}\dots$ denote sets of quantified variables.
Finally, $\val{a},\val{b},\val{c}\dots$ denote valuations of associated (sets
of) symbols.

%
%
\emph{
In the rest of this paper, we assume w.l.o.g.~that all formulas are in Skolem
normal form.
Recall that any formula $\phi$ in classical logic can be normalized into a
formula $\psi$ in Skolem normal form  such that any model of $\phi$ can be
lifted into a model of $\psi$, and vice versa.
This strong relation, much closer to formula equivalence than to formula
equisatisfiability, ensures that our correctness and completeness results all
along the paper hold for arbitrarily quantified formula.
}

\paragraph{\emph{Appendix}}
\emph{
Additional technical details (proofs, experiments, etc.) are provided in
appendix.
This  content will be made  available in an online technical report.
}

\section{Musing with independence}\label{section:preliminary}

\subsection{Independent interpretations, terms and formulas}

%
%
%
%
A solution $(\val{x},\val{a})$ of $\Phi$ does not depend on $\set{x}$ if
$\Phi(\set{x},\set{a})$ is always true or always false, for all possible
valuations of $\set{x}$ as long as $\set{a}$ is set to $\val{a}$.
More formally, we define the independence of an interpretation of $\Phi$
w.r.t.~$\set{x}$ as follows:


\begin{definition}[Independent interpretation]\label{def:indep_interp}
  \begin{itemize}[topsep=0pt]

    \item Let $\Phi\left(\set{x},\set{a}\right)$ a formula with free variables
      $\set{x}$ and $\set{a}$.
      Then an interpretation $\I$ of $\Phi\left(\set{x},\set{a}\right)$ is
      independent of $\set{x}$ if for all interpretations $\J$ equal to $\I$
      except on $\set{x}$, $\I\models\Phi$ if and only if $\J\models\Phi$.

    \item Let $\Delta\left(\set{x},\set{a}\right)$ a term with free variables
      $\set{x}$ and $\set{a}$.
      Then an interpretation $\I$ of $\Delta\left(\set{x},\set{a}\right)$ is
      independent of $\set{x}$ if for all interpretations $\J$ equal to $\I$
      except on $\set{x}$, $\sem{\Delta\left(\set{x},\set{a}\right)}_{\I} =
      \sem{\Delta\left(\set{x},\set{a}\right)}_{\J}$.

  \end{itemize}
\end{definition}

Regarding formula ${ax+b>0}$ from \Cref{fig:motivation}, $\left\lbrace {a=0},
{b=1}, {x=1} \right\rbrace$ is independent of $x$ while $\left\lbrace {a=1},
{b=0}, {x=1} \right\rbrace$ is not.
Considering  term $\left( t\left[{a}\leftarrow{b} \right] \right)[c]$, with $t$
an array written at index $a$ then read at index $c$, $\left\lbrace{a=0},
{b=42}, {c=0}, {t=\left[\dots\right]} \right\rbrace$ is independent of $t$
(evaluates to $42$) while $\left\lbrace {a=0}, {b=1}, {c=2},
{t=\left[\dots\right]} \right\rbrace$ is not (evaluates to $t\left[2\right]$).
%
%
%
%
%
%
%
%
%
We now define independence for formulas and terms.

%

\begin{definition}[Independent formula and term]\label{def:indep_form_term}
  \begin{itemize}[topsep=0pt]

    \item Let $\Phi\left(\set{x},\set{a}\right)$ a formula with free variables
      $\set{x}$ and $\set{a}$.\!
      Then $\Phi\left(\set{x},\set{a}\right)$ is independent of $\set{x}$ if
      $\forall\set{x}.\forall\set{y}. \left(\Phi\left(\set{x},\set{a}\right)
      \Leftrightarrow \Phi\left(\set{y},\set{a}\right)\right)$ holds true for
      any value of $\set{a}$.

    \item Let $\Delta\left(\set{x},\set{a}\right)$ a term with free variables
      $\set{x}$ and $\set{a}$.
      Then $\Delta\left(\set{x},\set{a}\right)$ is independent of $\set{x}$ if
      $\forall\set{x}.\forall\set{y}. \left(\Delta\left(\set{x},\set{a}\right) =
      \Delta\left(\set{y},\set{a}\right)\right)$ holds true for any value of
      $\set{a}$.

  \end{itemize}
\end{definition}

\Cref{def:indep_form_term} of formula and term independence is far stronger than
\Cref{def:indep_interp} of interpretation independence.
Indeed, it can easily be checked that if a formula $\Phi$ (resp.\ a term
$\Delta$) is independent of  $\set{x}$, then any interpretation of $\Phi$
(resp.\ $\Delta$) is independent of  $\set{x}$.
However, the converse is false as formula $ax+b>0$ is not independent of $x$,
but has an interpretation $\left\lbrace{a=0},{b=1},{x=1}\right\rbrace$ which is.


\subsection{Independence conditions} \label{sec:SIC}

Since it is rarely the case that a formula (resp.\ term) is independent from a
set of variables $\set{x}$, we are interested in \emph{Sufficient Independence
Conditions}.
These conditions are additional constraints that can be added to a formula
(resp.\ term) in such a way that they make the formula (resp.\ term) independent
of $\set{x}$.

\begin{definition}[Sufficient Independence Condition  (SIC)]\label{sic}\hfill
  \begin{itemize}[topsep=0pt]

    \item A Sufficient Independence Condition for a formula
      $\Phi\left(\set{x},\set{a}\right)$ with regard to $\set{x}$ is a formula
      $\Psi\left(\set{a}\right)$ such that $\Psi\left(\set{a}\right) \models
      ({\forall\set{x}.\forall\set{y}.  \Phi\left(\set{x},\set{a}\right)
      \Leftrightarrow \Phi\left(\set{y},\set{a}\right)})$.

    \item A Sufficient Independence Condition for a term
      $\Delta\left(\set{x},\set{a}\right)$ with regard to $\set{x}$, is a
      formula $\Psi\left(\set{a}\right)$ such that $\Psi\left(\set{a}\right)
      \models ({\forall\set{x}.\forall\set{y}.
      \Delta\left(\set{x},\set{a}\right) =
      \Delta\left(\set{y},\set{a}\right)})$.

  \end{itemize}
\end{definition}

We denote by $\sic_{\Phi,\set{x}}$ (resp.\ $\sic_{\Delta,\set{x}}$) a Sufficient
Independence Condition for a formula $\Phi\left(\set{x},\set{a}\right)$ (resp.\
for a term $\Delta\left(\set{x},\set{a}\right)$) with regard to $\set{x}$.
For example,  $a=0$ is a $\sic_{\Phi,x}$ for  formula $\Phi \triangleq
{ax+b>0}$, and $a=c$ is a $\sic_{\Delta,t}$ for term $\Delta \triangleq
\left(t\left[{a}\leftarrow{b}\right]\right)[c]$.
Note that $\bot$ is always a \sic, and that \sic\ are closed under $\wedge$ and
$\vee$.
%
%
%
%
%
\Cref{prop:sic_correct} clarifies the interest of \sic\ for model generation.
%

\begin{proposition}[Model generalization]\label{prop:sic_correct}
  Let $\Phi\left(\set{x},\set{a}\right)$  a formula and $\Psi$  a
  $\sic_{\Phi,\set{x}}$.
  If there exists an interpretation $\left\lbrace\val{x},\val{a}\right\rbrace$
  such that $\left\lbrace\val{x},\val{a}\right\rbrace \models
  \Psi\left(\set{a}\right) \wedge \Phi\left(\set{x},\set{a}\right)$, then
  $\left\lbrace\val{a}\right\rbrace \models
  \forall\set{x}.\Phi\left(\set{x},\set{a}\right)$.
\end{proposition}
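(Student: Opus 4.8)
The statement follows almost directly from unwinding the definition of SIC together with the relationship between formula independence and interpretation independence. The plan is as follows. Assume we have an interpretation $\{\val{x},\val{a}\}$ with $\{\val{x},\val{a}\} \models \Psi(\set{a}) \wedge \Phi(\set{x},\set{a})$. First, from $\{\val{x},\val{a}\} \models \Psi(\set{a})$ and the fact that $\Psi$ is a $\sic_{\Phi,\set{x}}$, I would invoke \Cref{sic} to conclude that $\{\val{x},\val{a}\} \models \forall\set{x}.\forall\set{y}.(\Phi(\set{x},\set{a}) \Leftrightarrow \Phi(\set{y},\set{a}))$. Since $\Psi(\set{a})$ depends only on $\set{a}$, this holds regardless of the particular valuation $\val{x}$ chosen; what matters is that $\val{a}$ validates $\Psi$.

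Next, I would exploit $\{\val{x},\val{a}\} \models \Phi(\set{x},\set{a})$: for our fixed $\val{a}$, there is \emph{some} valuation of $\set{x}$ (namely $\val{x}$) making $\Phi$ true. Combining this with the entailed equivalence $\forall\set{x}.\forall\set{y}.(\Phi(\set{x},\set{a}) \Leftrightarrow \Phi(\set{y},\set{a}))$ instantiated at $\set{y} := \val{x}$ and $\set{x}$ arbitrary, we get that $\Phi(\set{x},\set{a})$ holds for \emph{every} valuation of $\set{x}$ (with $\set{a}$ fixed to $\val{a}$). That is exactly $\{\val{a}\} \models \forall\set{x}.\Phi(\set{x},\set{a})$, which is the desired conclusion. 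One can phrase this more semantically via \Cref{def:indep_interp}: $\Psi$ being a SIC makes the interpretation $\{\val{x},\val{a}\}$ independent of $\set{x}$, so by \Cref{def:indep_interp} every interpretation $\J$ agreeing with it except on $\set{x}$ still satisfies $\Phi$; since $\{\val{x},\val{a}\}$ itself satisfies $\Phi$, so do all such $\J$, hence $\forall\set{x}.\Phi$ holds under $\val{a}$.

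There is essentially no hard step here; the only thing to be careful about is the quantifier bookkeeping — making sure that the entailment $\Psi(\set{a}) \models \forall\set{x}.\forall\set{y}.(\cdots)$ is applied at the level of the interpretation $\{\val{x},\val{a}\}$ (whose restriction to $\set{a}$ is $\val{a}$), and then correctly instantiating the universally quantified equivalence to transport the single witness $\val{x}$ to all valuations of $\set{x}$. I would also note in passing the role of the Skolem-normal-form convention: since arbitrary formulas reduce to this form with models lifting both ways, this generalization lemma applies to arbitrarily quantified $\Phi$ as well.
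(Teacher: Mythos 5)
Your proof is correct and follows essentially the same route as the paper's: extract $\val{a} \models \Psi(\set{a})$, apply the SIC entailment to obtain the universally quantified equivalence $\forall\set{x}.\forall\set{y}.(\Phi(\set{x},\val{a}) \Leftrightarrow \Phi(\set{y},\val{a}))$, and use the single witness $\val{x}$ to transport satisfaction of $\Phi$ to all valuations of $\set{x}$. The quantifier bookkeeping you flag is exactly the one step the paper also makes explicit, so there is nothing to add.
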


\begin{proof}[sketch of]
  \Cref{proof:sic_correct}.
\end{proof}


For the sake of completeness, we introduce now the notion of \emph{Weakest
Independence Condition} for a formula $\Phi\left(\set{x},\set{a}\right)$ with
regard to $\set{x}$ (resp.\ a term $\Delta\left(\set{x},\set{a}\right)$).
We will denote such conditions $\wic_{\Phi,\set{x}}$ (resp.\
$\wic_{\Delta,\set{x}}$).

\begin{definition}[Weakest Independence Condition (WIC)]\label{wic} \hfill
  \begin{itemize}[topsep=0pt]

    \item A Weakest Independence Condition for a formula
      $\Phi\left(\set{x},\set{a}\right)$ with regard to $\set{x}$  is a
      $\sic_{\Phi,\set{x}}$ $\Pi$ such that, for any other $\sic_{\Phi,\set{x}}$
      $\Psi$, $\Psi\models\Pi$.

    \item A Weakest Independence Condition for a term
      $\Delta\left(\set{x},\set{a}\right)$ with regard to $\set{x}$  is a
      $\sic_{\Delta,\set{x}}$ $\Pi$ such that, for any other
      $\sic_{\Delta,\set{x}}$ $\Psi$, $\Psi\models\Pi$.

  \end{itemize}
\end{definition}

%
Note that $\Omega \triangleq \forall\set{x}.\forall\set{y}.
\left(\Phi\left(\set{x},\set{a}\right) \Leftrightarrow
\Phi\left(\set{y},\set{a}\right)\right)$ is always a $\wic_{\Phi,\set{x}}$, and
any formula $\Pi$ is a $\wic_{\Phi,\set{x}}$ if and only if $\Pi \equiv \Omega$.
%
%
Therefore all syntactically different \wic\ have the same semantics.
%
%
%
As an example, both \sic\ $a=0$ and $a=c$ presented earlier are $\wic$.
%
%
\Cref{prop:wic_complete} emphasizes the interest of \wic\ for model generation.

\begin{proposition}[Model specialization]\label{prop:wic_complete}
  Let $\Phi\left(\set{x},\set{a}\right)$  a formula and $\Pi(\set{a})$  a
  $\wic_{\Phi,\set{x}}$.
  If there exists an interpretation $\left\lbrace \val{a}\right \rbrace$ such
  that $\left\lbrace \val{a} \right\rbrace \models \forall\set{x}.
  \Phi\left(\set{x},\set{a}\right)$, then $\left\lbrace \val{x},\val{a}
  \right\rbrace \models \Pi\left(\set{a}\right) \wedge
  \Phi\left(\set{x},\set{a}\right)$ for any valuation $\val{x}$  of $\set{x}$.
\end{proposition}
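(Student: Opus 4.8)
The plan is to reduce to the canonical \wic. Recall from the remark preceding the statement that every $\wic_{\Phi,\set{x}}$ $\Pi$ is equivalent to $\Omega \triangleq \forall\set{x}.\forall\set{y}.\left(\Phi\left(\set{x},\set{a}\right) \Leftrightarrow \Phi\left(\set{y},\set{a}\right)\right)$, i.e.\ $\Pi \equiv \Omega$, so $\Pi$ and $\Omega$ have exactly the same models. Hence it suffices to prove the statement with $\Pi$ replaced by $\Omega$; the general case then follows immediately since $\left\lbrace\val{x},\val{a}\right\rbrace \models \Omega \wedge \Phi$ if and only if $\left\lbrace\val{x},\val{a}\right\rbrace \models \Pi \wedge \Phi$.

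So assume $\left\lbrace\val{a}\right\rbrace \models \forall\set{x}.\Phi\left(\set{x},\set{a}\right)$ and fix an arbitrary valuation $\val{x}$ of $\set{x}$. First, instantiating the universal quantifier at $\val{x}$ immediately gives $\left\lbrace\val{x},\val{a}\right\rbrace \models \Phi\left(\set{x},\set{a}\right)$. Second, we check $\left\lbrace\val{x},\val{a}\right\rbrace \models \Omega$: since the free variables of $\Omega$ lie in $\set{a}$, this amounts to $\left\lbrace\val{a}\right\rbrace \models \Omega$. Now take any valuations $\val{x}'$, $\val{y}'$ of $\set{x}$ and $\set{y}$ (which range over the same sorts, $\set{y}$ being a fresh copy of $\set{x}$). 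From $\left\lbrace\val{a}\right\rbrace \models \forall\set{x}.\Phi\left(\set{x},\set{a}\right)$ we get both $\left\lbrace\val{x}',\val{a}\right\rbrace \models \Phi$ and $\left\lbrace\val{y}',\val{a}\right\rbrace \models \Phi$, so the biconditional $\Phi\left(\val{x}',\val{a}\right) \Leftrightarrow \Phi\left(\val{y}',\val{a}\right)$ holds vacuously. As $\val{x}'$ and $\val{y}'$ were arbitrary, $\left\lbrace\val{a}\right\rbrace \models \Omega$, hence $\left\lbrace\val{x},\val{a}\right\rbrace \models \Omega$. Combining the two facts yields $\left\lbrace\val{x},\val{a}\right\rbrace \models \Omega \wedge \Phi$, and therefore $\left\lbrace\val{x},\val{a}\right\rbrace \models \Pi\left(\set{a}\right) \wedge \Phi\left(\set{x},\set{a}\right)$, as desired.

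There is no real difficulty here: the statement is essentially the dual of \Cref{prop:sic_correct} and simply expresses that a \wic\ loses no models. The only points requiring a little care are bookkeeping ones, namely that satisfaction of $\Pi$ (equivalently $\Omega$) is insensitive to the valuation chosen for $\set{x}$ because $\set{x}$ does not occur free in it, and that $\set{y}$ being a fresh renaming of $\set{x}$ lets us reuse $\forall\set{x}.\Phi$ to obtain $\Phi(\val{y}',\val{a})$ for every $\val{y}'$. One should also make explicit the reliance on $\Pi \equiv \Omega$, which is exactly the characterization of \wic\ recalled above; all the substance is in that characterization rather than in the present argument.
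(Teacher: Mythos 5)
Your proof is correct and follows essentially the same route as the paper's: both establish that $\val{a}$ satisfies the canonical condition $\Omega \triangleq \forall\set{x}.\forall\set{y}.(\Phi(\set{x},\set{a}) \Leftrightarrow \Phi(\set{y},\set{a}))$ because the biconditional holds vacuously under $\forall\set{x}.\Phi$, transfer this to $\Pi$ via the characterization of \wic, and then conjoin with an instantiation of $\forall\set{x}.\Phi$ at the arbitrary $\val{x}$. Your version is slightly more explicit about the appeal to $\Pi \equiv \Omega$ (only the direction $\Omega \models \Pi$, which is immediate from $\Omega$ being a \sic\ and $\Pi$ a \wic, is actually needed), but the substance is identical.
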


\begin{proof}[sketch of]
  \Cref{proof:wic_complete}.
\end{proof}

From now on, our goal is to infer from a formula $\forall\set{x}.
\Phi\left(\set{x},\set{a}\right)$ a $\sic_{\Phi,\set{x}}$
$\Psi\left(\set{a}\right)$, find a model for $\Psi\left(\set{a}\right) \wedge
\Phi\left(\set{x},\set{a}\right)$ and generalize it.
This $\sic_{\Phi,\set{x}}$ should be as weak ---~in the sense ``less
coercive''~--- as possible, as otherwise $\bot$ could always be used, which
would not be very interesting for our overall purpose.

For the sake of simplicity, previous definitions omit to mention the theory  to
which the \sic\ belongs.
If the theory $\T$ of the quantified formula is decidable we can always choose
$\forall\set{x}.\forall\set{y}. \left(\Phi\left(\set{x},\set{a}\right)
\Leftrightarrow \Phi\left(\set{y},\set{a}\right)\right)$ as a \sic,
%
%
but it is simpler to directly use a $\T$-solver.
\emph{The challenge is, for formulas in an undecidable theory $\T$, to find a
non-trivial \sic\ in its quantifier-free fragment \qf-$\T$.}
%

Under this constraint, we cannot expect a systematic construction of $\wic$, as
it would allow to decide the satisfiability of any quantified theory with a
decidable quantifier-free fragment.
%
%
Yet informally, the closer a \sic\ is to be a \wic, the closer our approach is
to completeness.
Therefore this notion might be seen as a fair gauge of the quality of a $\sic$.
\emph{Having said that, we leave a deeper study on the inference of \wic\ as
future work.}

\section{Generic framework for SIC-based model generation}
\label{section:generic}
\label{sec:generic}

%
We describe now our overall approach.
\Cref{algorithm:overall} presents our \sic-based generic framework for model
generation (\Cref{section:solver}).
Then, \Cref{algorithm:taint:generic} proposes a taint-based approach for \sic\
inference (\Cref{section:taint}).
%
%
Finally, we discuss complexity and efficiency issues (\Cref{section:complexity})
%
%
and detail  extensions (\Cref{section:extensions}), such as partial elimination.

\emph{From now on, we do not distinguish anymore between terms and formulas,
their treatment being symmetric, and we call targeted variables the variables we
want to be independent of.}

\subsection{SIC-based model generation}\label{section:solver}

\vspace{-\intextsep}
\begin{algorithm2e}
  \DontPrintSemicolon
  \SetKwFunction{FOverall}{\overall}
  \SetKwFunction{FGeneric}{\generic}
  \SetKwFunction{FSolve}{\solve}
  \SetKwProg{Fn}{Function}{:}{}
  \SetKwProg{KwPara}{Parameter:}{}{}
  \SetKwSwitch{Match}{With}{}{match}{}{with}{}{}{}

  \KwPara{\FSolve}{
    \KwIn{$\Phi(v)$ a formula in  \qf-$\T$}
    \KwOut{$\sat\left(\val{v}\right)$ with $\val{v}\models\Phi$, \unsat\ or \unknown}
  }\smallskip

  \KwPara{\FGeneric}{
    \KwIn{$\Phi$ a formula in  \qf-$\T$, and $\set{x}$ a set of targeted variables}
    \KwOut{$\Psi$ a  $\sic_{\Phi,\set{x}}$ in \qf-$\T$}
  }\smallskip

  \Fn{\FOverall}{
    \KwIn{$\forall\set{x}.\Phi\left(\set{x},\set{a}\right)$ a universally
    quantified formula over theory $\T$}
    \KwOut{$\sat\left(\val{a}\right)$ with $\val{a} \models \forall\set{x}.
    \Phi\left(\set{x},\set{a}\right)$, \unsat\ or \unknown}
    \smallskip
    {Let $\Psi\left(\set{a}\right) \triangleq
    \FGeneric\left(\Phi\left(\set{x},\set{a}\right),\set{x}\right)$}\\
    \Match{\FSolve$\left(\Phi\left(\set{x},\set{a}\right)\wedge\Psi\left(\set{a}\right)\right)$}{
      \lWith{$\sat\left(\val{x},\val{a}\right)$}{
        \KwRet{$\sat\left(\val{a}\right)$}}
      \With{\unsat}{
        \lIf{$\Psi$ is a $\wic_{\Phi,\set{x}}$}{\KwRet{\unsat}}
        \lElse{\KwRet{\unknown}}
      }
      \lWith{\unknown}{\KwRet{\unknown}}
    }}
\caption{SIC-based model generation for quantified formulas}
\label{algorithm:overall}
\end{algorithm2e}
\vspace{-\intextsep}

%
Our model generation technique  is described in \Cref{algorithm:overall}.
Function \code{\overall} takes as input a formula $\forall\set{x}.
\Phi\left(\set{x},\set{a}\right)$ over a theory $\T$.
It first calculates a $\sic_{\Phi,\set{x}}$ $\Psi\left(\set{a}\right)$ in
\qf-$\T$.
Then it solves $\Phi\left(\set{x},\set{a}\right) \wedge
\Psi\left(\set{a}\right)$.
Finally, depending on the result and whether $\Psi\left(\set{a}\right)$ is a
$\wic_{\Phi,\set{x}}$ or not, it answers \sat, \unsat\ or \unknown.
\code{\overall}  is parametrized by two functions \code{\solve} and
\code{\generic}:
\begin{itemize}[topsep=0pt]
  \item \code{\solve} is a decision procedure (typically a SMT solver) for
    \qf-$\T$.
    \code{\solve} is said to be \emph{correct} if each time it answers \sat\
    (resp.\ \unsat) the formula is satisfiable (resp.\ unsatisfiable); it is
    said to be \emph{complete} if it always answers \sat\ or \unsat, never
    \unknown.
    %
    %
  \item  \code{\generic} takes as input a formula $\Phi$ in \qf-$\T$ and a set
    of targeted  variables $\set{x}$, and produces a  $\sic_{\Phi,\set{x}}$ in
    \qf-$\T$.
    %
    It is said to be \emph{correct} if it always returns a \sic, and
    \emph{complete} if all the \sic\ it returns are \wic.
    A possible implementation of \code{\generic} is described in
    \Cref{algorithm:taint:generic} (\Cref{section:taint}).
\end{itemize}
%
%
Function \code{\overall} enjoys the two following properties, where correctness
and completeness are defined as for \code{\solve}.


\begin{theorem}[Correctness and completeness]
  \label{prop:algo_correct}
  \label{prop:algo_complet}
  \begin{itemize}[topsep=0pt]
    \item If \code{\solve} and \code{\generic} are  correct, then
      \code{\overall} is correct.
      %
      %
      %
    \item If \code{\solve} and \code{\generic} are complete, then
      \code{\overall} is complete.
      \end{itemize}
\end{theorem}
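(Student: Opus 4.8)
The plan is to prove each bullet separately by following the three-way case split on the result of the call \code{\solve}$(\Phi \wedge \Psi)$ that structures \Cref{algorithm:overall}, using \Cref{prop:sic_correct} for the correctness direction and \Cref{prop:wic_complete} for the completeness direction.

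For \textbf{correctness}, assume \code{\solve} and \code{\generic} are correct. Since \code{\generic} is correct, $\Psi(\set{a})$ really is a $\sic_{\Phi,\set{x}}$. I would examine the three branches. If \code{\solve} returns $\sat(\val{x},\val{a})$, then by correctness of \code{\solve} we have $\{\val{x},\val{a}\} \models \Phi(\set{x},\set{a}) \wedge \Psi(\set{a})$, so \Cref{prop:sic_correct} (model generalization) yields $\{\val{a}\} \models \forall\set{x}.\Phi(\set{x},\set{a})$, which is exactly what \code{\overall} reports as $\sat(\val{a})$. If \code{\solve} returns \unknown, \code{\overall} returns \unknown, which is vacuously consistent with correctness. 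If \code{\solve} returns \unsat, then $\Phi \wedge \Psi$ is genuinely unsatisfiable; \code{\overall} returns \unsat only in the sub-case where $\Psi$ is additionally a $\wic_{\Phi,\set{x}}$, so I must show that in that sub-case $\forall\set{x}.\Phi(\set{x},\set{a})$ is unsatisfiable. This is the contrapositive of \Cref{prop:wic_complete}: if some $\{\val{a}\} \models \forall\set{x}.\Phi$ existed, then \Cref{prop:wic_complete} would give $\{\val{x},\val{a}\} \models \Pi(\set{a}) \wedge \Phi$ with $\Pi = \Psi$, contradicting unsatisfiability of $\Phi\wedge\Psi$. In the remaining sub-case (\unsat\ but $\Psi$ not known to be a \wic) \code{\overall} returns \unknown, again vacuously fine.

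For \textbf{completeness}, assume \code{\solve} and \code{\generic} are complete (hence also correct, so correctness of \code{\overall} is already in hand and I only need that it never returns \unknown). Completeness of \code{\generic} means the $\Psi$ it returns is a $\wic_{\Phi,\set{x}}$, so the test ``$\Psi$ is a $\wic$'' in the \unsat\ branch always succeeds and that branch returns \unsat, never \unknown. Completeness of \code{\solve} means the call \code{\solve}$(\Phi\wedge\Psi)$ never returns \unknown, so the third branch of the match is never taken. Hence every reachable exit of \code{\overall} is either $\sat(\val{a})$ or \unsat, i.e.\ \code{\overall} is complete.

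The only real subtlety — the ``hard part'' — is the \unsat\ branch of the correctness argument, where one must invoke \Cref{prop:wic_complete} in contrapositive form and be careful that the $\wic$ test gating the \unsat\ answer is precisely what licenses that invocation; elsewhere the argument is a routine unwinding of definitions and of the already-established \Cref{prop:sic_correct}.
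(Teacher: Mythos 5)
Your proposal is correct and takes essentially the same route as the paper, whose entire proof is a one-line appeal to \cref{prop:sic_correct} and \cref{prop:wic_complete}; your write-up is simply the explicit case analysis on the result of \code{\solve} that this appeal presupposes, including the key contrapositive use of \cref{prop:wic_complete} in the \unsat\ branch. One immaterial slip: under the paper's definitions, completeness of \code{\solve} and \code{\generic} does not entail their correctness, but your completeness argument never actually needs correctness --- only that \unknown\ is never returned --- so the parenthetical can simply be dropped.
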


\begin{proof}[sketch of]
  Follow directly from \Cref{prop:sic_correct,prop:wic_complete} (\Cref{sec:SIC}).
\end{proof}

\subsection{Taint-based SIC inference}\label{section:taint}

\vspace{-\intextsep}
\begin{algorithm2e}
  \DontPrintSemicolon
  \SetKwFunction{FGeneric}{\generic}
  \SetKwFunction{FTheory}{\theory}
  \SetKwInput{KwDefault}{Default}
  \SetKwProg{KwPara}{Parameter:}{}{}
  \SetKwProg{Fn}{Function}{:}{}
  \SetKwProg{Either}{either}{}{}

  \KwPara{\FTheory}{
    \KwIn{$f$ a  function symbol,   its parameters $\phi_i$,  $\set{x}$ a set of
    targeted variables and  $\psi_i$ their associated $\sic_{\phi_i,\set{x}}$}
    \KwOut{$\Psi$ a $\sic_{f\left(\phi_i\right),\set{x}}$}
    \KwDefault{Return $\bot$}
  }\smallskip

  \Fn{\FGeneric{$\Phi$,$\set{x}$}}{
    \KwIn{$\Phi$ a formula  and $\set{x}$ a set of targeted variables}
    \KwOut{$\Psi$ a $\sic_{\Phi,\set{x}}$}
    \smallskip
    \lEither{$\Phi$ is a constant}{
      \KwRet{$\top$}
    }
    \lEither{$\Phi$ is a variable $v$}{
      \KwRet{$v\notin\set{x}$}
    }
    \Either{
      $\Phi$ is a function $f\left(\phi_1,.\,,\phi_n\right)$
    }{
      Let $\psi_i \triangleq \FGeneric\left(\phi_i,\set{x}\right)$ for all
      $i\in\left\{1,.\,,n\right\}$\\
      Let $\Psi_{\phantom{i}} \triangleq \FTheory\left(f, \left(\phi_1,\!.,\phi_n\right),
      \left(\psi_1,\!.,\psi_n\right), \set{x}\right)$\\
      \KwRet{$\Psi \vee {\bigwedge_i\psi_i}$}
    }}
\caption{Taint-based  \sic\ inference}
\label{algorithm:taint:generic}
\end{algorithm2e}
\vspace{-\intextsep}

%
%
\Cref{algorithm:taint:generic} presents a taint-based implementation of function
\code{\generic}.
It consists in a (syntactic) core calculus  described here, refined by a
(semantic) theory-dependent calculus \code{\theory} described in
\Cref{section:theory}.
%
%
%
%
From  formula $\Phi\left(\set{x},\set{a}\right)$ and targeted variables
$\set{x}$, \code{\generic} is defined recursively as follow.

%
If $\Phi$ is a constant
%
%
it returns $\top$ as constants are independent of any variable.
%
%
If $\Phi$ is a variable $v$, it returns $\top$ if we may depend on $v$ (i.e., $v
\not\in\set{x}$), $\bot$ otherwise.
%
If $\Phi$ is a function $f\left(\phi_1,.\,,\phi_n\right)$, it first recursively
computes for every sub-term $\phi_i$ a $\sic_{\phi_i,\set{x}}$ $\psi_i$.
%
Then these results are sent with $\Phi$ to \code{\theory} which computes a
$\sic_{\Phi,\set{x}}$ $\Psi$.
%
%
%
The procedure returns the disjunction between $\Psi$
and the conjunction of the $\psi_i$'s.
%
%
Note that \code{\theory} default value $\bot$ is absorbed by  the disjunction.

The intuition is that if the $\phi_i$'s are independent of $\set{x}$, then
$f\left(\phi_1,.\,,\phi_n\right)$ is.
Therefore \Cref{algorithm:taint:generic} is said to be \emph{taint-based} as,
when \code{\theory} is left to its default value, it acts as a form of taint
tracking \cite{DD77,O95} inside the formula.
%
%
%
%

\begin{proposition}[Correctness]\label{correctness}
  %
  %
  Given a formula $\Phi\left(\set{x},\set{a}\right)$ and assuming that
  \code{\theory} is correct, then $\code{\generic}\left(\Phi,\set{x}\right)$
  indeed computes a $\sic_{\Phi,\set{x}}$.
\end{proposition}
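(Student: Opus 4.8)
The plan is to proceed by structural induction on the formula/term $\Phi$, following exactly the three cases in \Cref{algorithm:taint:generic}. Throughout, "being a $\sic_{\Phi,\set{x}}$" is unfolded via \Cref{sic}: a formula $\Psi(\set{a})$ is a $\sic_{\Phi,\set{x}}$ iff $\Psi(\set{a}) \models (\forall\set{x}.\forall\set{y}.\ \Phi(\set{x},\set{a}) \Leftrightarrow \Phi(\set{y},\set{a}))$ (with $=$ instead of $\Leftrightarrow$ for terms). I would first record two trivial facts used repeatedly: $\top$ is always a $\sic$ for any constant (constants evaluate identically under any interpretation), and — crucially — $\sic$s are closed under $\vee$, as noted just after \Cref{sic}. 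The latter is what lets me combine the theory-dependent contribution with the syntactic one.

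\emph{Base cases.} If $\Phi$ is a constant, \code{\generic} returns $\top$, which is a $\sic$ since $\sem{\Phi}_\I$ does not depend on any variable. If $\Phi$ is a variable $v$, \code{\generic} returns the (constant) formula $v\notin\set{x}$: if $v\notin\set{x}$ this is $\top$, and indeed $\Phi$ is literally independent of $\set{x}$ because $v$ is not among the $\set{x}$; if $v\in\set{x}$ this is $\bot$, which is vacuously a $\sic$. In both subcases the returned formula is a genuine $\sic_{\Phi,\set{x}}$.

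\emph{Inductive case.} Suppose $\Phi = f(\phi_1,\dots,\phi_n)$. By the induction hypothesis, each $\psi_i \triangleq \code{\generic}(\phi_i,\set{x})$ is a $\sic_{\phi_i,\set{x}}$. The algorithm returns $\Psi \vee \bigwedge_i \psi_i$, where $\Psi \triangleq \code{\theory}(f,(\phi_i),(\psi_i),\set{x})$. By the hypothesis that \code{\theory} is correct, $\Psi$ is a $\sic_{f(\phi_i),\set{x}}$ (its specification in the \textbf{Parameter} block of \Cref{algorithm:taint:generic} states exactly this; the default return value $\bot$ trivially satisfies it). So it remains to show $\bigwedge_i\psi_i$ is a $\sic_{\Phi,\set{x}}$; closure of $\sic$ under $\vee$ then finishes the case. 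For this, fix an interpretation $\I$ with $\I\models\bigwedge_i\psi_i$ and any $\J$ agreeing with $\I$ off $\set{x}$. Since each $\psi_i$ is a $\sic_{\phi_i,\set{x}}$ and $\I\models\psi_i$, we get $\sem{\phi_i}_\I = \sem{\phi_i}_\J$ for every $i$ (using that $\psi_i(\set a)$ is evaluated the same in $\I$ and $\J$, as it mentions only $\set a$). Then, because $f$ is interpreted functionally and identically in $\I$ and $\J$, $\sem{f(\phi_1,\dots,\phi_n)}_\I = \sem{f(\phi_1,\dots,\phi_n)}_\J$ (and for the Boolean-valued case, $\I\models\Phi \iff \J\models\Phi$). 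Hence $\bigwedge_i\psi_i \models \forall\set{x}.\forall\set{y}.\ (\Phi(\set x,\set a)\Leftrightarrow\Phi(\set y,\set a))$, as required.

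\emph{Main obstacle.} The only delicate point is the inductive step's "propagation" argument — making precise that agreement of all arguments $\phi_i$ on two interpretations differing only on $\set{x}$ forces agreement of $f(\phi_1,\dots,\phi_n)$, uniformly across formulas (Boolean connectives, predicates) and terms (function symbols, including $\ite$, array $\select/\store$, etc.). This is where the many-sorted evaluation semantics $\sem{\cdot}_\I$ has to be invoked carefully, and where one must be sure that $\psi_i$, being over $\set a$ only, is a "constant" with respect to the $\I/\J$ perturbation; but it is essentially bookkeeping once the semantics of $\sem{\cdot}$ is spelled out. Everything else reduces to the two cited facts ($\top$ is a $\sic$; $\sic$s closed under $\vee$) plus the assumed correctness of \code{\theory}.
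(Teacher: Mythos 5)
Your proof is correct and is the natural structural induction one would expect here; the paper gives no written argument for this proposition and defers entirely to a Coq mechanization, which almost certainly formalizes exactly the induction you describe (base cases for constants and variables, then closure of \sic\ under $\vee$ combined with the congruence step that agreement of all $\sem{\phi_i}$ under an $\set{x}$-perturbation forces agreement of $\sem{f(\phi_1,\dots,\phi_n)}$, since $f$ is interpreted identically in $\I$ and $\J$). The two auxiliary facts you lean on ($\top$ is a \sic\ for constants; \sic s are closed under $\vee$) are both already recorded in \Cref{sec:SIC}, and your handling of the \code{\theory} contribution via its correctness assumption (with default $\bot$ trivially a \sic) matches the algorithm's specification, so there is no gap.
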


\begin{proof}[sketch of]
  This proof has been mechanized in \coq.
\end{proof}



Note that on the other hand, completeness does not hold: in general
$\code{\generic}$ does not compute a $\wic$, cf.\ discussion in
\Cref{discussion:wic}.

\subsection{Complexity and efficiency}\label{section:complexity}

We now evaluate the overhead induced by \Cref{algorithm:overall} in terms of
formula size and complexity of the resolution --- the running time of
\Cref{algorithm:overall} itself being expected to be negligible (preprocessing).
%
%


\begin{definition}
  The size of a term is inductively defined as
  $\mbox{size}\left(x\right)\triangleq 1$ for $x$ a variable, and
  $\mbox{size}\left(f\left(t_1,.\,,t_n\right)\right)\triangleq
  1+\Sigma_i\,\mbox{size}\left(t_i\right)$ otherwise.
  We say that \code{\theory} is bounded in size if there exists $K$ such that,
  for all term $\Delta$,\\
  ${\mbox{size} \left(\code{\theory}\left(\Delta,\cdot\right)\right) \leq {K}}$.
\end{definition}

\begin{proposition}[Size bound]\label{prop:size_bound}
  Let $N$ be the maximal arity of symbols defined by theory $\T$.
  %
  If \code{\theory} is bounded in size by $K$, then for all formula $\Phi$ in
  $\T$, $\mbox{size}\left(\code{\generic}\left(\Phi,\cdot\right)\right) \leq
  \left(K+N\right)\cdot\mbox{size}\left(\Phi\right)$.
\end{proposition}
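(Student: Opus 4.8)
The plan is to prove this by structural induction on the formula (term) $\Phi$, tracking how \code{\generic} builds its output in each of the three cases of \Cref{algorithm:taint:generic}. Write $s(\Phi) \triangleq \mbox{size}(\Phi)$ and $g(\Phi) \triangleq \mbox{size}(\code{\generic}(\Phi,\cdot))$; the goal is $g(\Phi) \leq (K+N)\cdot s(\Phi)$. The base cases are immediate: if $\Phi$ is a constant, \code{\generic} returns $\top$, and if $\Phi$ is a variable $v$, it returns the literal $v \notin \set{x}$ (a formula whose size is a fixed small constant, which we may take to be $\leq K+N$ by suitably normalizing the size convention, or absorb into $K$); in both cases $s(\Phi)=1$ so the bound holds.

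For the inductive step, suppose $\Phi = f(\phi_1,\dots,\phi_n)$ with $n \leq N$. \code{\generic} returns $\Psi \vee \bigwedge_i \psi_i$ where $\psi_i = \code{\generic}(\phi_i,\cdot)$ and $\Psi = \code{\theory}(f,(\phi_i),(\psi_i),\set{x})$. I would bound the size of this output additively: the outer disjunction and the $n-1$ conjunctions contribute at most a constant times $n$ nodes (at most $n$, say, if we count each binary connective as one node), $\mbox{size}(\Psi) \leq K$ by the hypothesis that \code{\theory} is bounded in size, and $\sum_i \mbox{size}(\psi_i) = \sum_i g(\phi_i) \leq (K+N)\sum_i s(\phi_i) = (K+N)(s(\Phi)-1)$ by the induction hypothesis. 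Collecting terms:
\[
  g(\Phi) \;\leq\; K + n + (K+N)\bigl(s(\Phi)-1\bigr) \;\leq\; K + N + (K+N)\bigl(s(\Phi)-1\bigr) \;=\; (K+N)\cdot s(\Phi),
\]
using $n \leq N$. One subtlety to be careful about: the literal $\Psi$ is not merely adjoined but the $\phi_i$ themselves may reappear inside $\code{\theory}(f,(\phi_i),\dots)$ (the motivating example has $a=0$, referencing the subterm $a$). So "bounded in size by $K$" must be read as bounding the size of the \emph{output of \code{\theory} as written}, counting any copied subterms; I would make this explicit, and note that in the intended instantiations (\Cref{sec:taint:rabsorbant}) \code{\theory} only emits small formulas like $t=0$ over immediate subterms, so $K$ is a genuine constant.

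The main obstacle is not the arithmetic — which is the routine telescoping above — but pinning down the accounting conventions so the constant comes out exactly $K+N$ rather than something like $K+N+c$: specifically, how the binary connectives $\vee$ and $\wedge$ introduced by \code{\generic} are charged against the size, and whether the $\leq K$ bound on \code{\theory} is meant to include or exclude the (re-)embedded subterms $\phi_i$. Once those are fixed (the cleanest choice being: \code{size} of an $n$-ary node is $1+\sum$ of children, and the $n$-fold conjunction $\bigwedge_i\psi_i$ together with the top disjunction contributes $\leq n \leq N$ extra nodes), the induction closes cleanly. I would also remark that the bound is tight in order of magnitude, since the recursive calls genuinely reproduce each subterm's \sic, giving the advertised linear overhead in formula size.
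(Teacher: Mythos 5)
Your proof is correct and follows essentially the same route as the paper's: a structural induction in which the inductive step charges $\mbox{size}(\Psi)\leq K$ for the \code{\theory} contribution and $1+(n-1)=n\leq N$ for the connectives of $\Psi\vee\bigwedge_i\psi_i$, then telescopes via $\sum_i\mbox{size}(\phi_i)=\mbox{size}(\Phi)-1$. Your side remark about whether the bound $K$ must count re-embedded subterms is a legitimate reading issue, and the paper resolves it exactly as you suggest, by binding sub-terms to constant-size fresh names so that \code{\theory} manipulates only binders.
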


\begin{proposition}[Complexity bound]\label{prop:complexity_bound}
  Let us suppose \code{\theory} bounded in size, and let $\Phi$ be a formula
  belonging to a theory $\T$ with polynomial-time checkable solutions.
  If $\Psi$ is a $\sic_{\Phi,\cdot}$ produced by \code{\generic}, then a
  solution for $\Phi\wedge\Psi$ is checkable in time polynomial in size of
  $\Phi$.
\end{proposition}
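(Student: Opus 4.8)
The plan is to reduce everything to the size bound of \Cref{prop:size_bound} and then to invoke the hypothesis on $\T$ directly. First I would observe that, since \code{\theory} is bounded in size, \Cref{prop:size_bound} applies: writing $N$ for the maximal arity of the symbols of $\T$ and $K$ for the size bound on \code{\theory}, we get $\mbox{size}(\Psi) = \mbox{size}(\code{\generic}(\Phi,\cdot)) \leq (K+N)\cdot\mbox{size}(\Phi)$. Consequently $\mbox{size}(\Phi \wedge \Psi) = 1 + \mbox{size}(\Phi) + \mbox{size}(\Psi) \leq (K+N+1)\cdot\mbox{size}(\Phi) + 1$, i.e.\ the size of the conjunction is \emph{linear} in $\mbox{size}(\Phi)$, the multiplicative constant depending only on the fixed parameters $K$ and $N$ of the theory and of \code{\theory}, not on $\Phi$.

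Next I would note that \code{\generic} introduces no fresh variables: inspecting \Cref{algorithm:taint:generic}, the constant case returns $\top$, the variable case returns $\top$ or $\bot$ (the predicate $v\notin\set{x}$ being decided syntactically), and the function case only combines, via $\vee$ and $\bigwedge$, formulas built over the variables already occurring in $\Phi$ together with whatever \code{\theory} produces — and by assumption \code{\theory} is a \sic\ construction over the same sub-terms, hence over the same variables. Therefore $\Phi\wedge\Psi$ is a formula of \qf-$\T$ whose set of variables is included in that of $\Phi$, so a candidate solution for $\Phi\wedge\Psi$ is (up to the representation convention underlying ``polynomial-time checkable solutions'') a candidate solution for $\Phi$, of size governed by $\mbox{size}(\Phi)$.

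Finally, since $\T$ has polynomial-time checkable solutions and $\Phi\wedge\Psi$ is a formula in $\T$, checking whether a given interpretation satisfies $\Phi\wedge\Psi$ takes time bounded by some fixed polynomial $P$ in $\mbox{size}(\Phi\wedge\Psi)$. Composing with the linear bound above yields a running time at most $P\!\left((K+N+1)\cdot\mbox{size}(\Phi) + 1\right)$, which is again a polynomial in $\mbox{size}(\Phi)$; this is the claimed bound.

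The only delicate point — and the one I would spell out carefully — is the second step: making precise that ``polynomial-time checkable solutions'' is a property that is preserved when we pass from $\Phi$ to the conjunction $\Phi\wedge\Psi$. This hinges on (i) $\qf$-$\T$ being closed under conjunction, which is immediate, and (ii) the absence of newly introduced variables in $\Psi$, so that no blow-up in the size of the model encoding is hidden in the construction. Once these two facts are established, the argument is a routine composition of a linear size bound with a polynomial time bound.
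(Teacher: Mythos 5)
Your proof is correct and follows essentially the same route as the paper's: invoke \Cref{prop:size_bound} to get a linear bound on $\mbox{size}(\Phi\wedge\Psi)$, observe that $\Phi\wedge\Psi$ still lies in $\T$, and compose with the polynomial-time checkability of $\T$. Your extra care about \code{\generic} introducing no fresh variables is a reasonable clarification of a point the paper leaves implicit, but it does not change the argument.
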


\begin{proof}[sketch of]
  \Cref{proof:size_bound,proof:complexity_bound}
\end{proof}

These propositions demonstrate that, for formula landing in complex enough
theories, our method lifts \qf-solvers to the quantified case (in an
approximated way) without any significant overhead, as long as \code{\theory} is
bounded in size.
%
%
%
This later constraint can be achieved by systematically binding sub-terms to
(constant-size) fresh names and having \code{\theory} manipulates these binders.
%

%
%
%




\subsection{Discussions}\label{section:extensions}

\paragraph{Extension}
%
%
Let us remarks that our framework encompasses partial quantifier elimination  as
long as the remaining quantifiers are handled by \code{\solve}.
%
For example, we may want to remove quantifications over arrays  but keep those
on bitvectors.
%
%
%
In this setting, \code{\generic} can also allow some level of quantification,
providing that \code{\solve} handles them.

\paragraph{About WIC} \label{discussion:wic}
%
%
%
%
%
%
As already stated, \code{\generic} does not propagate \wic\ in general.
%
%
For example, considering formulas $t_1 \triangleq (x<0)$ and $t_2 \triangleq
(x\geq 0)$, then $\wic_{t_1,x} = \bot$ and $\wic_{t_2,x} =  \bot$.
Hence  \code{\generic} returns $\bot$ as \sic\  for $t_1 \vee t_2$, while
actually $\wic_{t_1 \vee t_2,x} = \top$.
As a second example, let us consider term $f(a,x)$ with $f$ uninterpreted.
Then $\wic_{f(a,x),x} =  \top$ while \code{\generic} returns $\bot$\todo{le
prouver?}.
%

Nevertheless, we can already highlight a few  cases where \wic\ can be computed.
\code{\generic} does propagate \wic\ on one-to-one uninterpreted
functions\todo{le prouver?}.
If no variable $\set{x}$ is present in a sub-term, then the associated \wic\ is
$\top$.
While a priori naive, this case becomes interesting when combined with
simplifications (\Cref{sec:implementation}) that may eliminate $\set{x}$.
If a sub-term falls in a sub-theory admitting quantifier elimination, then the
associated \wic\ is computed by eliminating quantifiers in
$(\forall.\set{x}.\set{y}. \Phi(\set{x},\set{a}) \Leftrightarrow
\Phi(\set{y},\set{a}))$.
We may also think of dedicated patterns:
regarding bitvectors, the \wic\ for $x \leq a \wedge x \leq x+k$ is $a \leq
\code{Max}-k$.
\emph{Identifying under which condition \wic\ propagation holds is a strong
direction for future work.}



\section{Theory-dependent SIC refinements}\label{section:theory}

%
We now present  theory-dependent \sic\ refinements
%
for theories relevant to program analysis: booleans, fixed-size bitvectors and
arrays ---
recall that uninterpreted functions are already handled by
\Cref{algorithm:taint:generic}.
%
%
%
We then propose a generalization of these refinements together with a
correctness proof for a larger class of operators.
%

\subsection{Refinement on theories} \label{sec:theory-refinement}

%
We recall \code{\theory} takes four parameters:
a function symbol $f$,
%
its arguments  $\left(t_1,.\,,t_n\right)$,
%
%
%
%
%
%
their associated \sic\  $\left(\taint{t_1},.\,,\taint{t_n}\right)$,
%
and targeted variables $\set{x}$.
%
\code{\theory} pattern-matches the function symbol and returns the associated
\sic\ according to rules in \Cref{fig:taint}.
%
%
%
If a function symbol is not supported, we return  default value $\bot$.
Constants and variables are handled by \code{\generic}.
For the sake of simplicity, rules in \Cref{fig:taint} are defined recursively,
but can easily fit the interface required for \code{\theory} in
\Cref{algorithm:taint:generic} by turning recursive calls into parameters.
%

\paragraph{Booleans and \ite}
Rules for the boolean theory (\Cref{fig:bl_taint}) handles $\blimply$, $\bland$,
$\blor$ and $\ite$ (if-then-else).
For binary operators, the \sic\ is the conjunction of the \sic\ associated to
one of the two sub-terms and a constraint on this sub-term that forces the
result of the operator to be constant --- e.g., to be equal to $\bot$ (resp.\
$\top$) for the antecedent (resp.\ consequent) of an implication.
%
%
These equality constraints are based on absorbing elements of operators.

Inference for the $\ite$ operator is more subtle.
%
%
%
%
%
Intuitively, if its condition is independent of some $\set{x}$, we use it to
select the $\sic_{\set{x}}$ of the sub-term that will be selected by the $\ite$
operator.
If the condition is dependent of $\set{x}$, then we cannot use it anymore to
select a $\sic_{\set{x}}$.
In this case, we return the conjunction of the $\sic_{\set{x}}$ of both
sub-terms and the constraint that the two sub-terms are equal.

\begin{figure}[htbp]
  \makebox[\textwidth][c]{
  \subfloat[Booleans and \ite]{
    \begin{minipage}[b]{0.6\textwidth}
      \[
        \begin{array}{r@{\ }c@{\ }l}
          \taint{\left({a}\blimply{b}\right)} & \triangleq &
          \left(\taint{a}\wedge{a=\bot}\right)\vee
          \left(\taint{b}\wedge{b=\top}\right) \\
          \taint{\left({a}\bland{b}\right)} & \triangleq &
          \left(\taint{a}\wedge{a=\bot}\right)\vee
          \left(\taint{b}\wedge{b=\bot}\right) \\
          \taint{\left({a}\blor{b}\right)} & \triangleq &
          \left(\taint{a}\wedge{a=\top}\right)\vee
          \left(\taint{b}\wedge{b=\top}\right) \\
          \taint{\left(\ite\,c\,a\,b\right)} & \triangleq &
          \left(\taint{c}\wedge\ite\,c\,\taint{a}\,\taint{b}\right)\vee
          \left(\taint{a}\wedge\taint{b}\wedge{a=b}\right)
        \end{array}
      \]
    \end{minipage}
  \label{fig:bl_taint}}
  \subfloat[Fixed-size bitvectors]{
    \begin{minipage}[b]{0.6\textwidth}
      \[
        \begin{array}{r@{\ }c@{\ }l}
          \taint{\left({a_n}\bvand{b_n}\right)} & \triangleq &
          \left(\taint{a_n}\wedge{a_n=0_n}\right)\vee
          \left(\taint{b_n}\wedge{b_n=0_n}\right) \\
          \taint{\left({a_n}\bvor{b_n}\right)} & \triangleq &
          \left(\taint{a_n}\wedge{a_n=1_n}\right)\vee
          \left(\taint{b_n}\wedge{b_n=1_n}\right) \\
          \taint{\left({a_n}\bvmul{b_n}\right)} & \triangleq &
          \left(\taint{a_n}\wedge{a_n=0_n}\right)\vee
          \left(\taint{b_n}\wedge{b_n=0_n}\right) \\
          \taint{\left({a_n}\bvshl{b_n}\right)} & \triangleq &
          \left(\taint{b_n}\wedge{b_n \bvge n}\right) \\
        \end{array}
      \]
    \end{minipage}
  \label{fig:bv_taint}}}

  \makebox[\textwidth][c]{
  \subfloat[Arrays ]{
    \begin{minipage}[b]{2\textwidth}
      \[
        \begin{array}{r@{\ }c@{\ }l}
          \taint{\left(\select\,\left(\store\,a\,i\,e\right)\,j\right)} & \triangleq &
          \taint{\left(\ite\,\left(i=j\right)\,e\,\left(\select\,a\,j\right)\right)} \\
          & \triangleq & \left(\taint{\left(i=j\right)}\wedge\left(\ite\,\left(i=j\right)\,
          \taint{e}\,\taint{\left(\select\,a\,j\right)}\right)\right)
          \vee\left(\taint{e}\wedge\taint{\left(\select\,a\,j\right)}\wedge\left(e=\select\,a\,j\right)\right) \\
          & \triangleq & \left(\taint{i}\wedge\taint{j}
          \wedge\left(\ite\,\left(i=j\right)\,
          \taint{e}\,\taint{\left(\select\,a\,j\right)}\right)\right)
          \vee\left(\taint{e}\wedge\taint{\left(\select\,a\,j\right)}\wedge\left(e=\select\,a\,j\right)\right) \\
        \end{array}
      \]
    \end{minipage}
    \label{fig:ax_taint}}}
  \caption{Examples of  refinements for \code{\theory}}
\label{fig:taint}
\end{figure}


\paragraph{Bitvectors and arrays}
Rules for bitvectors (\Cref{fig:bv_taint}) follow similar ideas, with constant
$\top$ (resp.\ $\bot$) substituted by $1_n$ (resp.\ $0_n$), the bitvector of
size $n$ full of ones (resp.\ zeros).
%
%
%
%
Rules for arrays (\Cref{fig:ax_taint}) are derived from the theory axioms.
The definition is recursive: rules need be applied until reaching either a
\store\ at the position where the \select\ occurs, or the initial array
variable.

As a rule of thumb, good \sic\ can be derived from function axioms in the form
of rewriting rules, as done for arrays.
Similar constructions can be obtained for example for stacks or queues.





\subsection{$\R$-absorbing functions} \label{sec:taint:rabsorbant}

%
%
We propose a generalization of the previous theory-dependent \sic\ refinements
to a larger class of functions, and prove its correctness.

Intuitively, if a function has an absorbing element, constraining one of its
operands to be equal to this element will ensure that the result of the function
is independent of the other operands.
%
%
However, it is not enough when a relation between some elements is needed, such
as with $\left(t[a\leftarrow b]\right)[c]$ where constraint $a=c$ ensures the
independence with regards to $t$.
%
%
We thus generalize the notion of absorption to $\R$-absorption, where $\R$ is a
relation between function arguments.

\begin{definition}\label{def:r-absorbing-function}
  Let $f:\tau_1\times\cdots\times\tau_n\rightarrow\tau$ a function.
  $f$ is $\R$-absorbing if there exists
  $\I_\R\subset\left\lbrace1,\cdots,n\right\rbrace$ and $\R$ a relation between
  $\alpha_i:\tau_i,\,i\in\I_\R$ such that,
  %
  %
  %
  %
  for all $b \triangleq \left(b_1,\dots,b_n\right)$
  %
  %
  and $c\triangleq \left(c_1,\dots,c_n\right)\in\tau_1\times\dots\times\tau_n$,
  %
  %
  if $\R(\left.b\right|_{\I_{\R}})$ and $\left.b\right|_{\I_{\R}} =
  \left.c\right|_{\I_{\R}}$ where $\left.\cdot\right|_{\I_{\R}}$ is the
  projection  on $\I_{\R}$, then  $f(b)=f(c)$.
  %
  %
  %
  %

  $\I_\R$ is called the support of the relation of absorption $\R$.
\end{definition}

%
For example, $\left(a,b\right) \mapsto {{a}\vee{b}}$ has two pairs
$\left\langle\R,\,\I_\R\right\rangle$ coinciding with the usual notion of
absorption, $\left\langle{a\!=\!\top},\,\left\{1_a\right\}\right\rangle$ and
$\left\langle{b\!=\!\top},\,\left\{2_b\right\}\right\rangle$.
%
%
Function $\left(x,y,z\right) \mapsto {xy+z}$ has among others the pair
$\left\langle{x\!=\!0},\,\left\{1_x,3_z\right\}\right\rangle$, while
$\left(a,b,c,t\right) \mapsto \left(t[a \leftarrow b]\right)[c]$ has the pair
$\left\langle{a\!=\!c},\,\left\{1_a,3_c\right\}\right\rangle$.
%
%
We can now state the following proposition:

\begin{proposition}\label{prop:r-absorbing-function}
  Let $f\left(t_1,\dots,t_n\right)$ be a $\R$-absorbing function of support
  $\I_\R$, and let $\taint{t_i}$ be a $\sic_{t_i,\set{x}}$ for some $\set{x}$.
  Then $\R\left(t_{i\in\mathcal{\I_\R}}\right)
  \bigwedge_{i\in\mathcal{\I_\R}}\taint{t_i}$ is a $\sic_{f,\set{x}}$.
\end{proposition}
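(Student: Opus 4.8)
The plan is to verify directly that the candidate formula $\Psi(\set{a}) \triangleq \R\bigl((t_i)_{i\in\I_\R}\bigr) \wedge \bigwedge_{i\in\I_\R}\taint{t_i}$ meets the defining condition of a $\sic_{f,\set{x}}$ from \Cref{sic}, namely that $\Psi \models \forall\set{x}.\forall\set{y}.\bigl(\Delta(\set{x},\set{a}) = \Delta(\set{y},\set{a})\bigr)$ for $\Delta \triangleq f(t_1,\dots,t_n)$. Unfolding the semantics of the two quantifier blocks, this reduces to fixing an interpretation $\I$ with $\I \models \Psi$ together with two interpretations $\I_1,\I_2$ that agree with $\I$ everywhere except on $\set{x}$, and proving $\sem{\Delta}_{\I_1} = \sem{\Delta}_{\I_2}$.

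First I would extract two facts from the two conjuncts of $\Psi$ holding at $\I$. From $\I \models \taint{t_i}$ for each $i\in\I_\R$ and the hypothesis that $\taint{t_i}$ is a $\sic_{t_i,\set{x}}$, \Cref{sic} gives $\I \models \forall\set{x}.\forall\set{y}.(t_i(\set{x},\set{a}) = t_i(\set{y},\set{a}))$; instantiating the quantifiers with the $\set{x}$-values carried by $\I_1$ and $\I_2$ yields $\sem{t_i}_{\I_1} = \sem{t_i}_{\I_2}$ — each argument of $f$ in the support is insensitive to the $\set{x}$-switch. From $\I \models \R\bigl((t_i)_{i\in\I_\R}\bigr)$, the relation $\R$ holds of the tuple $\bigl(\sem{t_i}_{\I}\bigr)_{i\in\I_\R}$, which by the previous point coincides with $\bigl(\sem{t_i}_{\I_1}\bigr)_{i\in\I_\R}$.

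Then I would invoke $\R$-absorption (\Cref{def:r-absorbing-function}) on the tuples $b \triangleq \bigl(\sem{t_1}_{\I_1},\dots,\sem{t_n}_{\I_1}\bigr)$ and $c \triangleq \bigl(\sem{t_1}_{\I_2},\dots,\sem{t_n}_{\I_2}\bigr)$: its two premises, $\R(\left.b\right|_{\I_\R})$ and $\left.b\right|_{\I_\R} = \left.c\right|_{\I_\R}$, are exactly the two facts just established. The definition then yields $f(b) = f(c)$, and since the symbol $f$ is interpreted by the function $f$ of \Cref{def:r-absorbing-function}, this reads $\sem{\Delta}_{\I_1} = \sem{\Delta}_{\I_2}$, which closes the argument.

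The proof is essentially bookkeeping around quantifier semantics; the only point deserving care is that $\R\bigl((t_i)_{i\in\I_\R}\bigr)$ may syntactically mention $\set{x}$ whenever the $t_i$ do, so $\Psi$ is not a priori a formula over $\set{a}$ alone. This turns out harmless because the argument never needs the $\set{x}$-variants $\I_1,\I_2$ to satisfy $\Psi$ — only the fixed $\I$ does — and the conjuncts $\taint{t_i}$ are precisely what neutralises the possible $\set{x}$-dependence of the relation conjunct (the first fact above makes $\R\bigl((t_i)_{i\in\I_\R}\bigr)$ evaluate identically at $\I$, $\I_1$ and $\I_2$). The same schema specialises to each concrete rule of \Cref{fig:taint} by exhibiting the relevant $\langle\R,\I_\R\rangle$ pair.
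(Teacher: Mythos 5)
Your proof is correct and follows essentially the same route as the paper's: both derive from the hypothesis that each $\taint{t_i}$ is a $\sic_{t_i,\set{x}}$ the invariance of the supported arguments under a change of $\set{x}$, then invoke the defining property of $\R$-absorption to transfer that invariance to $f$; you merely unfold the entailments at the level of interpretations where the paper stays at the level of $\models$. Your closing remark on the syntactic occurrence of $\set{x}$ inside $\R\bigl((t_i)_{i\in\I_\R}\bigr)$ is a genuine point of care that the paper's proof silently skips, and your resolution of it is sound.
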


\begin{proof}[sketch of]
  \Cref{proof:r-absorbing}
\end{proof}


Previous examples (\Cref{sec:theory-refinement}) can be recast in term of
$\R$-absorbing function, proving their correctness (\Cref{fig:relation} in
\Cref{appendix:refinement-encoding}).
Note also that regarding our end-goal, we should accept only $\R$-absorbing
functions in \qf-$\T$.

%

\section{Experimental evaluation}\label{section:benchmarks}

This section describes the implementation of our method
(\Cref{sec:implementation}) for bitvectors and arrays (ABV), together with
experimental evaluation (\Cref{sec:evaluation}).
%
%

\subsection{Implementation}\label{sec:implementation}

Our prototype  \tfml\ (\emph{Taint engine for ForMuLa}) comprises 7 klocs of OCaml.
%
%
Given an input formula in the \smtlib\ format \cite{BST10} (ABV theory), \tfml\
performs several normalizations before adding taint information following
\Cref{algorithm:overall}.
%
%
The process ends with simplifications as taint usually introduces many constant
values, and a new \smtlib\ formula is output.


\paragraph{Sharing with let-binding}
This stage is crucial as it allows to avoid term duplication in \code{\theory}
(\Cref{algorithm:taint:generic,section:complexity,prop:size_bound}).
%
%
%
%
We introduce new names for relevant sub-terms in order to easily share them.
%
%

\paragraph{Simplifications}
%
%
%
%
We perform constant propagation and rewriting (standard rules, e.g.~$x-x \mapsto
0$ or $x \times 1 \mapsto x$) on both initial and transformed formulas --
equality is soundly approximated by syntactic equality.

\paragraph{Shadow arrays}
We encode taint constraints over arrays through \emph{shadow arrays}.
For each array declared in the formula, we declare a (taint) shadow  array.
The default value for all cells of the shadow array is the taint of the original
array,
and for each value stored (resp.\ read) in the original array, we store (resp.\
read) the taint of the value in the shadow array.
As logical arrays are infinite, we cannot constraint all the values contained in
the initial shadow array.
Instead, we rely on a common trick in array theory:
%
%
we constraint only cells corresponding to a relevant read index in the formula.
%
%
%

\paragraph{Iterative skolemization}
%
%
%
While we have supposed along the paper to work on skolemized formulas, we have
to be more careful in practice.
Indeed, skolemization introduce dependencies between a skolemized variable and
all its preceding universally quantified variables, blurring our analysis and
likely resulting in considering the whole formula as dependent.
%
%
%
Instead, we follow  an iterative process:
\begin{enumerate*}
  \item Skolemize the first block of existentially quantified variables;
  \item Compute the independence condition for any targeted variable in the
    first block of universal quantifiers and remove these quantifiers;
  \item Repeat.

\end{enumerate*}
This results in full Skolemization together with the construction of an
independence condition, while avoiding many unnecessary dependencies.


\subsection{Evaluation}\label{sec:evaluation}

\paragraph{Objective}
We experimentally evaluate the following research questions:
\begin{enumerate*}[label=\emph{RQ\arabic*}]
  \item \label{enum:comparison} How does our approach perform with regard to
    state-of-the-art approaches for model generation of quantified formulas?
  \item \label{enum:enhancement} How effective is it at lifting quantifier-free
    solvers into (\sat-only) quantified solvers?
  \item \label{enum:efficiency} How efficient is it in terms of preprocessing
time and formula size overhead?  \end{enumerate*}
We evaluate our method on a set of formulas combining arrays and bitvectors
(paramount in software verification),  against state-of-the-art solvers for
these theories.

\paragraph{Protocol}
The experimental setup below runs on an Intel(R) Xeon(R) E5-2660 v3 @ 2.60GHz,
4GB RAM per process, and a \timeout\ of 1000s per formula.
\begin{description}[topsep=0pt, leftmargin=.4cm]
  \item[Metrics]
    For \ref{enum:comparison} we compare the number of \sat\  and \unknown\
    answers  between solvers supporting  quantification, with and without our
    approach.
    For \ref{enum:enhancement}, we compare the number of \sat\ and  \unknown\
    answers between quantifier-free solvers enhanced by our approach and solvers
    supporting quantification.
    For \ref{enum:efficiency}, we measure  preprocessing time and formulas size
    overhead.
    %
    %
  \item[Benchmarks]
    We consider two sets of \abv\ formulas.
    First, a set of 1421 formulas from (a modified version of) the symbolic
    execution tool \binsec~\cite{DBTMFPM16} representing quantified reachability
    queries (cf.~\Cref{sec:motivation}) over  \binsec\ benchmark programs
    (security challenges, e.g.~\code{crackme} or vulnerability finding).
    The initial (array) memory is quantified so that models depend only on user
    input.
    %
    %
    %
    %
    Second, a set of 1269 \abv\ formulas generated from formulas of the
    \qf-\abv\ category of \smtlib\ \cite{BST10} --
    %
    %
    %
    sub-categories \code{brummayerbiere}, \code{dwp\;formulas} and
    \code{klee\;selected}.
    %
    %
    The generation process consists in universally quantifying some of the
    initial array variables, mimicking quantified reachability problems.
    %
    %
    %
    %
  \item[Competitors]
    For \ref{enum:comparison}, we compete against the two  state-of-the-art SMT
    solvers for quantified formulas  CVC4  \cite{BCDHJKRT11} (finite model
    instantiation \cite{RTGK13}) and Z3 \cite{MB08} (mode-based instantiation
    \cite{GM09}).
    %
    We also consider degraded versions CVC4$_{E}$ and Z3$_{E}$ that roughly
    represent standard E-matching \cite{DNS05}.
    For \ref{enum:enhancement}  we use Boolector~\cite{BB09}, one of the very
    best \qf-ABV solvers.
    %
\end{description}

\begin{table}
  \caption{Answers and resolution time (in seconds, include \timeout)}
  \label{table:all-solvers-smt-lib}
  \label{table:resolution_time}
  \makebox[\textwidth][c]{
    \begin{tabular}{|c|c|c|c|c|c|c|c|c|c|c|c|c|c|}\cline{4-12}
      \multicolumn{3}{c|}{\strut} &
      Boolector\raisebox{1pt}{$\bullet$} &
      CVC4 & CVC4\raisebox{1pt}{$\bullet$} &
      CVC4$_{E}$ & CVC4$_{E}$\raisebox{1pt}{$\bullet$} &
      Z3 & Z3\raisebox{1pt}{$\bullet$} &
      Z3$_{E}$ & Z3$_{E}$\raisebox{1pt}{$\bullet$} \\\hline

      \multirow{4}{*}{\rotatebox{90}{\smtlib\strut}}
      &    & \sat\strut     & \bf 399 & 84   & 242  & 84   & 242  & 261  & 366  & 87   & 366  \\
      & \# & \unsat\strut   & N/A     & 0    & N/A  & 0    & N/A  & 165  & N/A  & 0    & N/A  \\
      &    & \unknown\strut & 870     & 1185 & 1027 & 1185 & 1027 & 843  & 903  & 1182 & 903  \\\cline{2-12}
      & \multicolumn{2}{c|}{total time\strut}
               & \bf 349 & 165 & 194\,667 & 165 & 196\,934 & 270\,150 & 36\,480 & 192 &  41\,935 \\\hline\hline

      \multirow{4}{*}{\rotatebox{90}{\binsec\strut}}
      &    & \sat\strut     & \bf 1042& 951  & 954  & 951  & 954  & 953  & \bf 1042& 953  & \bf 1042\\
      & \# & \unsat\strut   & N/A     & 62   & N/A  & 62   & N/A  & 319  & N/A     & 62   & N/A     \\
      &    & \unknown\strut & 379     & 408  & 467  & 408  & 467  & 149  & 379     & 406  & 379     \\\cline{2-12}
      & \multicolumn{2}{c|}{total time\strut}
               & \bf 1152 & 64\,761 & 76\,811 & 64\,772 & 77\,009 & 30\,235 &  11\,415 & 135  & 11\,604 \\\hline

      \multicolumn{12}{c}{
        \begin{tabular}{c@{\qquad}c}
          solver\raisebox{1pt}{$\bullet$}: solver enhanced with our method &
          Z3$_{E}$, CVC4$_{{E}}$: essentially E-matching\\
        \end{tabular}} \\
    \end{tabular}}
\end{table}

\paragraph{Results}
\Cref{table:all-solvers-smt-lib,table:complementarity} and  \Cref{fig:overhead}
sum up our experimental results, which  have all  been cross-checked for
consistency.
\Cref{table:all-solvers-smt-lib} reports the number of successes (\sat\ or
\unsat) and failures (\unknown), plus total solving times (average time in
\Cref{appendix:sec:benchs}).
The \raisebox{1pt}{$\bullet$} sign indicates formulas preprocessed with our
approach.
In that case it is impossible to correctly answer \unsat\ (no \wic\ checking),
the \unsat\ line is thus N/A.
Since Boolector  do not support quantified \abv\ formulas, we only give results
with our approach enabled.
\Cref{table:all-solvers-smt-lib} reads as follow: of the 1269 \smtlib\ formulas,
standalone Z3 solves 426 formulas (261 \sat, 165 \unsat), and 366 (all \sat) if
preprocessed.
%
%
Interestingly, our approach always improves the underlying solver in terms of
solved (\sat) instances, either in a significant way (\smtlib) or in a modest
way (\binsec). Yet, recall that in a software verification setting every win
matters (possibly new bug found or new assertion proved).
For Z3\raisebox{1pt}{$\bullet$}, it also strongly reduces computation time.
Last but not least, Boolector\raisebox{1pt}{$\bullet$} (a pure \qf-solver)
turns out to have the best performance on \sat-instances, beating
state-of-the-art approaches both in terms of solved instances and computation
time.

\Cref{table:complementarity}  substantiates the complementarity of the different
methods, and reads as follow: for \smtlib, Boolector\raisebox{1pt}{$\bullet$}
solves 224 (\sat) formulas missed by Z3, while Z3 solves 86 (\sat) formulas
missed by Boolector\raisebox{1pt}{$\bullet$}, and 485 (\sat) formulas are solved
by either one of them.

%
%
%
%
%
%
%
%

\Cref{fig:overhead} shows formula size averaging a 9-fold increase (min 3, max
12): yet they are easier to solve because more constrained.
Regarding performance and overhead of  the tainting process, \emph{taint time is
almost always less than 1s} in our experiments (not shown here), 4min for worst
case, clearly dominated by resolution time.
The worst case is due to a pass of linearithmic complexity which can be
optimized to be logarithmic.

\begin{table}
  \caption{Complementarity of our approach with existing solvers (\sat\ instances)}
  \label{table:complementarity}

  \makebox[\textwidth][c]{
    \begin{tabular}{|c|c|ccc|ccc|ccc|}\cline{3-11}
      \multicolumn{2}{c|}{} &
      \multicolumn{3}{c|}{CVC4\raisebox{1pt}{$\bullet$}} &
      \multicolumn{3}{c|}{Z3\raisebox{1pt}{$\bullet$}} &
      \multicolumn{3}{c|}{Boolector\raisebox{1pt}{$\bullet$}} \\\hline
      \multirow{2}{*}{\smtlib\strut}
      & CVC4 & -10 & +168 & [252] &      &      &       & -10 & +325 & [409] \\\cline{2-11}
      & Z3   &     &      &       & -119 & +224 & [485] & -86 & +224 & [485] \\\hline\hline
      \multirow{2}{*}{\binsec\strut}
      & CVC4 & -25 & +28 & [979] &      &      &        & -25 & +116 & [1067] \\\cline{2-11}
      & Z3   &     &      &       & -25 & +114 & [1067] & -25 & +114 & [1067] \\\hline
    \end{tabular}}
\end{table}

\paragraph{Pearls} We show hereafter two particular applications of our method.
\Cref{tab:grub} reports results of another symbolic execution experiment, on the
\code{grub} example.
On this example, Boolector\raisebox{1pt}{$\bullet$} completely outperforms
existing approaches.
As a second application, while the main drawback of our method is that it
precludes proving \unsat, this is easily mitigated by complementing the approach
with another one geared at (or able to) proving \unsat, yielding efficient
solvers for quantified formulas, as shown in \Cref{tab:combination}.

\paragraph{Conclusion}
Experiments demonstrate the relevance of our taint-based technique for model
generation.
(\ref{enum:comparison}) Results in Table \ref{table:all-solvers-smt-lib} shows
that our approach greatly facilitates the resolution process.
\emph{On these examples, our method performs better than state-of-the-art
solvers but also strongly complements them (\Cref{table:complementarity})}.
%
%
%
%
%
%
(\ref{enum:enhancement}) Moreover, \Cref{table:all-solvers-smt-lib} demonstrates
that our technique is highly effective at  lifting quantifier-free solvers to
quantified formulas, in both number of \sat\ answers
and computation time.
\emph{Indeed, once lifted, Boolector perform better (for \sat-only) than Z3 or
CVC4 with full quantifier support}.
%
%
Finally (\ref{enum:efficiency}) our tainting method itself is very efficient
both in time and space, making it perfect either for a preprocessing step or for
a deeper integration into a solver.
In our current prototype implementation, we consider the cost to be low.


%
%
%
%
%
%
%
%
%
%

\begin{minipage}[t]{0.95\textwidth}
  \begin{minipage}{0.49\textwidth}
    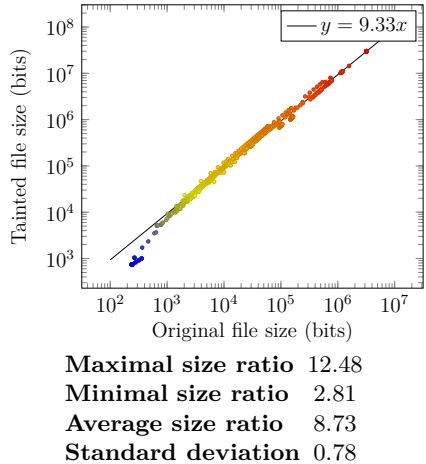
\begin{figure}[H]
      \begin{minipage}{\textwidth}
        \large
        \resizebox{\textwidth}{!}{
          \begin{tikzpicture}
            \begin{axis}[
                xmode=log,
              ymode=log,
              xlabel={Original file size (bits)},
              ylabel={Tainted file size (bits)},
              scaled ticks=false,
            ]
              \addplot[samples=10, domain=100:10000000] {9.3344672941*x};
              \addlegendentry{$y=9.33x$};
              \addplot[only marks,scatter,mark size=1pt] table{size.txt};
            \end{axis}
          \end{tikzpicture}}
      \end{minipage}\\
      \begin{minipage}{\textwidth}
        \makebox[\textwidth][c]{
          \begin{tabular}{lc}
            \bf Maximal size ratio & 12.48  \\
            \bf Minimal size ratio & 2.81   \\
            \bf Average size ratio & 8.73   \\
            \bf Standard deviation & 0.78   \\
          \end{tabular}}
      \end{minipage}
      \caption{Overhead in formula size}
      \label{fig:overhead}
    \end{figure}
  \end{minipage}
  \begin{minipage}{0.49\textwidth}
    \begin{minipage}{\textwidth}
      \begin{table}[H]
        \caption{GRUB example} \label{tab:grub}
        \makebox[\textwidth][c]{
          \begin{tabular}{|c|c|c|c|}\cline{3-4}
            \multicolumn{2}{c|}{\strut} & Boolector\raisebox{1pt}{$\bullet$} & Z3 \\\hline
            & \sat\strut     & \bf 540 &   1 \\\cline{2-4}
            \# & \unsat\strut   & N/A &  42 \\\cline{2-4}
            & \unknown\strut & 355 & 852 \\\hline
            \multicolumn{2}{|c|}{total time} & \bf 16\,732 & 159\,765 \\\hline
          \end{tabular}}
      \end{table}
    \end{minipage}
      \begin{minipage}{\textwidth}
        \begin{table}[H]
          \caption{Best approaches} \label{tab:combination} \footnotesize

          \makebox[\textwidth][c]{
            \begin{tabular}{|c|c|c@{\ \vrule width 1pt\ }c|c|}\cline{3-5}
              \multicolumn{2}{c|}{} &
              \multicolumn{1}{c@{\ \vrule width 1pt\ }}{former} &
              \multicolumn{2}{c|}{new}\\\cline{3-5}
              \multicolumn{2}{c|}{} & Z3
              & B\raisebox{1pt}{$\bullet$}
              & B\raisebox{1pt}{$\bullet$} $\triangleright$ Z3 \\\hline

              \multirow{4}{*}{\rotatebox{90}{\smtlib\strut}}
              & \sat\strut     & 261      & 399    & 485   \\
              & \unsat\strut   & 165      & N/A    & 165   \\
              & \unknown\strut & 843      & 870    & 619   \\\cline{2-5}
              & time           & 270\,150 & 350    & 94\,610 \\\hline\hline

              \multirow{4}{*}{\rotatebox{90}{\binsec\strut}}
              & \sat\strut     & 953     & 1042     & 1067     \\
              & \unsat\strut   & 319     & N/A      & 319      \\
              & \unknown\strut & 149     & 379      & 35       \\\cline{2-5}
              & time           & 64\,761 & 1\,152 & 1\,169 \\\hline
            \end{tabular}}
        \end{table}
      \end{minipage}
  \end{minipage}
\end{minipage}

\section{Related work}\label{sec:related}

\todo{la ref pointée par un des reviewer : Model Finding for Recursive Functions
in SMT. IJCAR 2016.  NDSB : certes. Et je dis quoi dessus ?}

Traditional approaches to solving quantified formulas essentially involve either
{generic methods geared at proving unsatisfiability} and validity \cite{DNS05},
or complete but { dedicated approaches} for particular theories
\cite{BMS06,WHM10}.
%
%
Besides, some recent methods \cite{IJS08,GM09,RTGK13}  aim to be correct and
complete for larger classes of theories.

\paragraph{Generic method for unsatisfiability}
Broadly speaking, these methods iteratively instantiate axioms until a
contradiction is found.
They are generic w.r.t.~the underlying theory and allow to reuse standard theory
solvers, but termination is not guaranteed.
Also, they are more suited to prove unsatisfiability than to find models.
In this family, E-matching \cite{DNS05,MB07} shows reasonable cost when combined
with conflict-based instantiation \cite{RTM14} or semantic triggers
\cite{DCKP12,DCKP16}.
In pure first-order logic (without theories), quantifiers are mainly handled
through resolution and superposition \cite{BG94,NR01} as done in Vampire
\cite{RV02,KV13} and E \cite{S02}.

\paragraph{Complete methods for specific theories}
Much work has been done on designing complete decision procedures for quantified
theories of interest, notably {array properties} \cite{BMS06}, {quantified
theory of bitvectors} \cite{WHM10}, Presburger arithmetic or Real Linear
Arithmetic \cite{BKRW11}.
Yet, They usually come at a high cost.

\paragraph{Generic methods for model generation}
Some recent works detail attempts at more general approaches to model
generation.
%

\emph{Local theory extensions} \cite{IJS08,BRKBW15} provide means to extend some
decidable theories with free symbols and quantifications, retaining
decidability.
The approach identifies specific forms of formulas and quantifications
(bounded), such that these theory extensions can be solved using finite
instantiation of quantifiers together with a decision procedure for the original
theory.
The main drawback is that the formula size can highly increase.

\emph{Model-based quantifier instantiation} is an active area of research
notably developed in Z3 and CVC4.
The basic line is to consider the partial model under construction in order to
find the right quantifier instantiations, typically in a try-and-refine manner.
Depending on the variants, these methods favors either satisfiability or
unsatisfiability.
They build on the underlying quantifier-free solver and can be mixed with
E-matching techniques, yet each refinement yields a solver call and the
refinement process may not terminate.
Ge and de Moura \cite{GM09} study decidable fragments of first-order logic
modulo theories for which model-based quantifier instantiation yields soundness
and refutational completeness. 
%
%
Reynolds et al. \cite{RTM14}, in CVC4, and Barbosa \cite{B16}, in veriT, use
models to guide the instantiation process towards instances refuting the current
model.
%
%
\emph{Finite model quantifier instantiation} \cite{RTGK13,RTGKDB13} reduces the
search to finite models, and is indeed geared toward model generation rather
than unsatisfiability.
Similar techniques have been used in program synthesis \cite{RDKTB15}.

We drop support for the unsatisfiable case but get more flexibility:
we deal with quantifiers on any sort, the approach terminates and is
lightweight, in the sense that it requires a single call to the underlying
quantifier-free solver.
%

\paragraph{Tainting and non-interference}
Our method can be seen as taking inspiration from program taint analysis
\cite{DD77,O95} developed for checking the non-interference \cite{S07} of public
and secrete input in security-sensitive programs.
As far as the analogy goes, our approach should not be seen as checking
non-interference, but rather as inferring preconditions of non-interference.
Moreover, our formula-tainting technique is closer to dynamic program-tainting
than to static program-tainting, in the sense that precise dependency conditions
are statically inserted at preprocess-time, then precisely explored at
solving-time.

\section{Conclusion}

This paper addresses the problem of generating models of quantified first-order
formulas over built-in theories.
%
We propose a correct and generic approach based on a reduction to the
quantifier-free case through the inference of independence conditions.
The technique is applicable to any theory with a decidable quantifier-free case
and allows to reuse all the work done on quantifier-free solvers.
The method significantly enhances the performances of state-of-the-art SMT
solvers for the quantified case, and supplements the latest advances in the
field.


Future developments aim to tackle the definition of more precise inference
mechanisms of independence conditions, the identification of interesting
subclasses for which inferring weakest independence conditions is feasible, and
the combination with other quantifier instantiation techniques.

\newpage

\bibliographystyle{abbrv}
\bibliography{biblio}

\newpage

\appendix

{ \em \noindent Additional  technical details  are provided here for  reviewer convenience.  This  content will be   available
in an online technical report. }

%
%
%
%
%
%

\section{Background: detailed notations}

This work uses many-sorted first-order logic with equality.
Let $\set\S$ be the set of \emph{sort symbols}, and for every $\S\in\set\S$ let
$\set\X_\S$ be the set of \emph{variables} of sort $\S$.
We assume the sets $\set\X_\S$ to be pairwise disjoint and note $\set\X$ their
union.

A \emph{signature} $\Sigma$ consists of a set $\Sigma^s\subseteq\set\S$ of sort
symbols and a set $\Sigma^f$ of \emph{sorted functions symbols}
$f^{\S_1,\dots,\S_n,\S}$, where ${n}\geq{0}$ and
$\S_1,\dots,\S_n,\S\in\Sigma^s$.
Given a signature $\Sigma$, well-sorted terms and (possibly quantified) formulas
with variables in $\set\X$ are inductively defined as usual.
We refer to them respectively as \emph{$\Sigma$-terms} and
\emph{$\Sigma$-formulas}.
A \emph{ground term} (resp.\ \emph{formula}) is a $\Sigma$-term (resp.\ formula)
without variables.

Given a tuple of variables $\set{x} \triangleq \left(x_1,\dots,x_n\right)$ and a
quantifier $\Q$ ($\forall$ or $\exists$), we shorten $\Q{x_1}\dots\Q{x_n}.\Phi$
as $\Q\set{x}.\Phi$ .
A formula in \emph{prenex normal form} has all its quantifiers at head position
and ranging over the whole quantifier-free formula, so that it is written as
$\Q_1{\set{x}_1}\dots\Q_n{\set{x}_n}.\Phi$ where $\Phi$ is a quantifier-free
formula.
A formula is in \emph{Skolem normal form} if it is in prenex normal form with
only universal quantifiers.
If $\Phi$ is a $\Sigma$-formula, we write $\Phi\left(\set{x}\right)$ to denote
that the free variables of $\Phi$ are in $\set{x}$.
Let $\set{t}\triangleq\left(t_1,\dots,t_n\right)$ be a term tuple, we write
$\Phi\left(\set{t}\right)$ for the formula obtained from $\Phi$ by
simultaneously replacing each occurrence of $x_i$ in $\Phi$ by $t_i$.

A \emph{$\Sigma$-interpretation} $\I$ maps each $\S\in\Sigma^s$ to a non-empty
set $\S^\I$, the \emph{domain} of $\S$ in $\I$, each $x\in\set\X$ of sort $\S$
to an element $x^\I\in\S^\I$, and each $f^{\S_1,\dots,\S_n,\S}\in\Sigma^f$ to a
total function $f^\I:{\S_1^\I}\times\dots\times{\S_n^\I}\rightarrow{\S^\I}$.
A satisfiability relation $\models$  between $\Sigma$-interpretations and
$\Sigma$-formulas is defined inductively as usual.
We sometimes refer to models as ``solutions''.

A \emph{theory} is a pair $\T\triangleq\left(\Sigma,\set\I\right)$ where
$\Sigma$ is a signature and $\set\I$ a class of $\Sigma$-interpretations that is
closed under variable reassignment and isomorphism.
A formula $\Phi\left(\set{x}\right)$ of $\T$ is \emph{valid in $\T$} (resp.\
\emph{satisfiable}, resp.\ \emph{unsatisfiable}) if it is satisfied by all
(resp.\ some, resp.\ no) interpretations $\I\in\set\I$.
A set $\Gamma$ of formulas \emph{entails in $\T$} a $\Sigma$-formula $\Phi$,
written $\Gamma\models_\T\Phi$, if every interpretation in $\set\I$ that
satisfies all formulas in $\Gamma$ satisfies $\Phi$ as well.

%
%

\newpage
\section{The \abv\ theory of \smtlib}

\subsection{Definition}

Definition of boolean, fixed-size bitvector and array theories studied in
\Cref{sec:theory-refinement}.

\vspace{-\intextsep}
\begin{figure}

  \makebox[\textwidth][c]{
    \subfloat[Booleans and \ite]{
      \begin{minipage}[b]{0.4\textwidth}
        \[
          \begin{array}{r@{\ :\ }l}
            \top,\bot & \smtbl \\
            \blnot & \smtbl\rightarrow\smtbl \\
            \blimply,\bland,\blor,\blxor & \smtbl\rightarrow\smtbl\rightarrow\smtbl \\
            = & \A\rightarrow\A\rightarrow\smtbl \\
            \ite & \smtbl\rightarrow\A\rightarrow\A\rightarrow\A \\
          \end{array}\\
        \]
      \end{minipage}
  \label{bl_theory}}

  \subfloat[Fixed-size bitvectors]{
    \begin{minipage}[b]{0.9\textwidth}
      \[
        \begin{array}{r@{\ :\ }l@{\ \mbox{with}\ }l}
          \concat & \smtbv{i}\rightarrow\smtbv{j}\rightarrow\smtbv{(i+j)} & i>0,\,j>0 \\
            \extract\,i\,j & \smtbv{m}\rightarrow\smtbv{(i-j+1)} & m > i \geq j \geq 0 \\
            {U} & \smtbv{m}\rightarrow\smtbv{m} & m>0 \\
            {B} & \smtbv{m}\rightarrow\smtbv{m}\rightarrow\smtbv{m} & m>0 \\
            {C} & \smtbv{m}\rightarrow\smtbv{m}\rightarrow\smtbl & m>0 \\[.2cm]
            \multicolumn{3}{c}{
            {U}=\left\{\bvnot,\dots\right\},
            {B}=\left\{\bvor,\bvand,\bvadd,\bvsub,\bvmul,\bvdiv,\bvshl,\bvshr,\dots\right\},
            {C}=\left\{\bvlt,\bvgt,\bvle,\bvge,\dots\right\}}\\
        \end{array}
        \]
    \end{minipage}
    \label{bv_theory}}}

  \makebox[\textwidth][c]{
    \subfloat[Arrays]{
      \begin{minipage}[b]{0.6\textwidth}
        \[\begin{array}{@{}r@{\,:\,}l@{}}
          \select & \smtar{\A}{\B}\rightarrow\A\rightarrow\B \\
          \store  & \smtar{\A}{\B}\rightarrow\A\rightarrow\B\rightarrow\smtar{\A}{\B} \\
        \end{array}\]
      \end{minipage}
      \begin{minipage}[b]{0.6\textwidth}
        \[\begin{array}{@{}l@{}}
          \forall{a\,i\,e}.\,\select\left(\store\ a\ i\ e\right)i=e \\
          \forall{a\,i\,j\,e}.\left(i \neq j\right)
          \Rightarrow\select\left(\store\ a\ i\ e\right)j=\select\ a\ j \\
          \forall{a\,b}.\left(\forall{i}.\,\select\ a\ i=\select\ b\ i\right)
          \Rightarrow{a=b} \\
        \end{array}\]
      \end{minipage}
    \label{ax_theory}}
  \caption{Definition of the \abv\ theory}}

  \label{appendix:theories}

\end{figure}

\subsection{Relation of absorption}

\Cref{fig:taint} recast in term of of $\R$-absorbing function, as mention in
\Cref{sec:taint:rabsorbant}.

\vspace{-\intextsep}
\label{appendix:refinement-encoding}
\begin{figure}
  \subfloat[Booleans]{
    \begin{minipage}{0.5\textwidth}
      \[
        \begin{array}{r@{\ }c@{\ }l}
          {a}\blimply{b} & : &
          \left\langle{a\!=\!\bot},\,\left\{1_a\right\}\right\rangle,
          \left\langle{b\!=\!\top},\,\left\{2_b\right\}\right\rangle \\
          {a}\bland{b} & : &
          \left\langle{a\!=\!\bot},\,\left\{1_a\right\}\right\rangle,
          \left\langle{b\!=\!\bot},\,\left\{2_b\right\}\right\rangle \\
          {a}\blor{b} & : &
          \left\langle{a\!=\!\top},\,\left\{1_a\right\}\right\rangle,
          \left\langle{b\!=\!\top},\,\left\{2_b\right\}\right\rangle \\
        \end{array}
      \]
    \end{minipage}
  \label{fig:bl_relation}}
\subfloat[Fixed-size bitvectors]{
  \begin{minipage}{0.5\textwidth}
    \[
      \begin{array}{r@{\ }c@{\ }l}
        {a_n}\bvor{b_n} & : &
          \left\langle{a_n\!=\!1_n},\,\left\{1_a\right\}\right\rangle,
          \left\langle{b_n\!=\!1_n},\,\left\{2_b\right\}\right\rangle \\
          {a_n}\bvand{b_n} & : &
          \left\langle{a_n\!=\!0_n},\,\left\{1_a\right\}\right\rangle,
          \left\langle{b_n\!=\!0_n},\,\left\{2_b\right\}\right\rangle \\
          {a_n}\bvmul{b_n} & : &
          \left\langle{a_n\!=\!0_n},\,\left\{1_a\right\}\right\rangle,
          \left\langle{b_n\!=\!0_n},\,\left\{2_b\right\}\right\rangle \\
      \end{array}
      \]
  \end{minipage}
  \label{fig:bv_relation}}
\\\subfloat[if then else]{
  \begin{minipage}{\textwidth}
    \[
      \begin{array}{r@{\ }c@{\ }l}
        \ite\,c\,a\,b & : &
          \left\langle{c\!=\!\top},\,\left\{1_c,2_a\right\}\right\rangle,
          \left\langle{c\!=\!\bot},\,\left\{1_c,3_b\right\}\right\rangle,
          \left\langle{a\!=\!b},\,\left\{2_a,3_b\right\}\right\rangle \\
      \end{array}
      \]
  \end{minipage}
  \label{fig:ite_relation}}
\caption{Relation of absorption for some \abv\ function symbols.}
  \label{fig:relation}
\end{figure}

\newpage
\section{Proofs}

\subsection{Model generalization (\Cref{prop:sic_correct})}\label{proof:sic_correct}

\emph{%
Let $\Phi\left(\set{x},\set{a}\right)$  a formula and $\Psi$  a
$\sic_{\Phi,\set{x}}$.
If there exists an interpretation $\left\lbrace\val{x},\val{a}\right\rbrace$
such that $\left\lbrace\val{x},\val{a}\right\rbrace \models
\Psi\left(\set{a}\right) \wedge \Phi\left(\set{x},\set{a}\right)$, then
$\left\lbrace\val{a}\right\rbrace \models
\forall\set{x}.\Phi\left(\set{x},\set{a}\right)$.
}

\begin{proof}
  Let $(\val{x},\val{a})$ an interpretation of
  $\Phi\left(\set{x},\set{a}\right)$, and let us assume that  $(\val{x},\val{a})
  \models \Psi\left(\set{a}\right) \wedge \Phi\left(\set{x},\set{a}\right)$.
  It comes immediately that  $(\val{x},\val{a}) \models
  \Psi\left(\set{a}\right)$ (E1) and $(\val{x},\val{a}) \models
  \Phi\left(\set{x},\set{a}\right)$ (E2).  From (E1) we deduce that   $\val{a}
  \models \Psi\left(\set{a}\right)$, since $\set{x}$ does not appear in $\Psi$.
  Then, $\Psi$ being a $\sic_{\Phi,\set{x}}$, we get that
  $({\forall\set{x}.\forall\set{y}.  \Phi\left(\set{x},\val{a}\right)
  \Leftrightarrow \Phi\left(\set{y},\val{a}\right)})$ holds true (E3).
  Combining (E3) with the fact that $\Phi\left(\val{x},\val{a}\right)$ holds
  true (from E2), we conclude that  $\Phi\left(\set{x},\val{a}\right)$ is
  satisfied for any value of $\set{x}$, hence $\val{a} \models
  \forall\set{x}.\Phi\left(\set{x},\set{a}\right)$.
\end{proof}

\subsection{Model specialization (\Cref{prop:wic_complete})}\label{proof:wic_complete}

\emph{%
Let $\Phi\left(\set{x},\set{a}\right)$  a formula and $\Pi(\set{a})$  a
$\wic_{\Phi,\set{x}}$.
If there exists an interpretation $\left\lbrace\val{a}\right\rbrace$ such that
$\left\lbrace\val{a}\right\rbrace \models \forall\set{x}.
\Phi\left(\set{x},\set{a}\right)$, then
$\left\lbrace\val{x},\val{a}\right\rbrace \models \Pi\left(\set{a}\right) \wedge
\Phi\left(\set{x},\set{a}\right)$ for any valuation $\val{x}$  of $\set{x}$.
}

\begin{proof}
  Let $\val{a}$  an interpretation such that $\val{a} \models
  \forall\set{x}.\Phi\left(\set{x},\set{a}\right)$ (E1).
  Then by definition, $\forall\set{x}.\Phi\left(\set{x},\val{a}\right)$ is true.
  Especially $\forall\set{x}.\forall\set{y}.
  \Phi\left(\set{x},\val{a}\right) \Leftrightarrow
  \Phi\left(\set{x},\val{a}\right)$ also is.
  Therefore $\val{a} \in \sem{\Pi(\set{a})}$, another way of stating that
  $\val{a} \models \Pi(\set{a})$ (E2).
  Let us now consider an arbitrary value $\val{x}$ for $\set{x}$.
  By combining (E2) and (E1), we obtain that $(\val{x},\val{a}) \models
  \Pi\left(\set{a}\right) \wedge \Phi\left(\set{x},\set{a}\right)$.
\end{proof}

\subsection{Size bound (\Cref{prop:size_bound})}\label{proof:size_bound}

\emph{%
Let $N$ be the maximal arity of symbols defined by theory $\T$.
%
%
If \code{\theory} is bounded in size by $K$, then for all formula $\Phi$ in
$\T$, $\mbox{size}\left(\code{\generic}\left(\Phi,\cdot\right)\right) \leq
\left(K+N\right)\cdot\mbox{size}\left(\Phi\right)$.
}

\begin{proof}
  Let $\Phi \triangleq f\left(\phi_1,\!.,\phi_n\right)$ be a formula, let
  $\psi_i \triangleq \code{\generic}\left(\phi_i\right)$ be results of recursive
  calls, and let $\Psi \triangleq \code{\theory}\left(f,
  \left(\phi_1,\!.,\phi_n\right), \left(\psi_1,\!.,\psi_n\right), \set{x}\right)$.
  \[\begin{array}{r@{\ }c@{\ }c@{\ +\ }c@{\ +\ }c}
    \mbox{size}\left(\code{\generic}\left(f\left(\phi_1,\dots,\phi_n\right)\right)\right)
    & = & \multicolumn{3}{@{}l}{\mbox{size}\left(\Psi \vee {\bigwedge_i\psi_i}\right)} \\
    & = & \mbox{size}\left(\Psi\right) & 1+(n-1) & \Sigma_i\mbox{size}\left(\psi_i\right) \\
    & \leq & K & N & \Sigma_i\mbox{size}\left(\psi_i\right) \\
  \end{array}
  \]
  Then by structural induction we have
  $\mbox{size}\left(\code{\generic}\left(\Phi,\cdot\right)\right) \leq
  {\left(K+N\right)\cdot\mbox{size}\left(\Phi\right)}$.
\end{proof}

\subsection{Complexity bound (\Cref{prop:complexity_bound})}\label{proof:complexity_bound}

\emph{%
Let us suppose \code{\theory} bounded in size, and let $\Phi$ be a formula
belonging to a theory $\T$ with polynomial-time checkable solutions.
If $\Psi$ is a $\sic_{\Phi,\cdot}$ produced by \code{\generic}, then a solution
for $\Phi\wedge\Psi$ is checkable in time polynomial in size of $\Phi$.
}

\begin{proof}
  If \code{\theory} is bounded in size, then \code{\generic} is linear in size
  by Proposition \ref{prop:size_bound}.
  So for any formula $\Phi$ in a theory $\T$, if $\Psi$ is the \sic\ produced by
  \code{\generic}, then $\Phi\wedge\Psi$ is linearly proportional to $\Phi$.
  As $\Psi$ lands in a sub-theory of $\T$, $\Phi\wedge\Psi$ belongs to $\T$ and
  therefore is checkable in polynomial time with regard to the size of $\Phi$.
\end{proof}

\subsection{$\R$-absorbing functions (\Cref{prop:r-absorbing-function})}\label{proof:r-absorbing}

\emph{%
Let $f\left(t_1,\dots,t_n\right)$ be a $\R$-absorbing function of support
$\I_\R$, and let $\taint{t_i}$ be a $\sic_{t_i,\set{x}}$ for some $\set{x}$.
Then $\R\left(t_{i\in\mathcal{\I_\R}}\right)
\bigwedge_{i\in\mathcal{\I_\R}}\taint{t_i}$ is a $\sic_{f,\set{x}}$.
}

\begin{proof}
  Let $\set{x}$ be a set of variables, and let $\R$ be a relation of absorption
  with support $\I_\R$ for $f$.
  For every $i\in\I_\R$ let $\taint{t_i}$ be a $\sic_{t_i,\set{x}}$.
  By definition $\taint{t_i}\left(\set{a}\right) \models
  \forall\set{x}.\forall\set{y}.
  t\left(\set{x},\set{a}\right) = t\left(\set{y},\set{a}\right)$, and therefore
  $\bigwedge_{i\in\mathcal{\I_\R}}\taint{t_i}\left(\set{a}\right) \models
  \bigwedge_{i\in\mathcal{\I_\R}} \forall\set{x}.\forall\set{y}.
  t_i\left(\set{x},\set{a}\right) = t_i\left(\set{y},\set{a}\right)$.
  Finally, as $\R$ is a relation of absorption for $f$,
  ${\R\left(t_i\left(\set{a}\right)\right)
  {\bigwedge_{i\in\mathcal{\I_\R}}\taint{t_i}\left(\set{a}\right)}} \models
  \forall\set{x}.\forall\set{y}.
  {f\left(t_i\left(\set{x},\set{a}\right)\right) =
  f\left(t_i\left(\set{y},\set{a}\right)\right)}$, i.e.,\
  $\R\left(t_{i\in\mathcal{\I_\R}}\right)
  \bigwedge_{i\in\mathcal{\I_\R}}\taint{t_i}$ is a $\sic_{f,\set{x}}$.
\end{proof}

\section{Benchmarks} \label{appendix:sec:benchs}

\begin{table}
  \caption{Number of successes (\sat\ or \unsat) and failures (\unknown), and
  average and total resolution time in seconds}
  \label{appendix:table:all-solvers-smt-lib}
  \label{appendix:table:resolution_time}
  \makebox[\textwidth][c]{
    \begin{tabular}{|c|c|c|c|c|c|c|c|c|c|c|c|c|c|c|}\cline{5-13}
      \multicolumn{4}{c|}{\strut} &
      Boolector\raisebox{1pt}{$\bullet$} &
      CVC4 & CVC4\raisebox{1pt}{$\bullet$} &
      CVC4$_{E}$ & CVC4$_{E}$\raisebox{1pt}{$\bullet$} &
      Z3 & Z3\raisebox{1pt}{$\bullet$} &
      Z3$_{E}$ & Z3$_{E}$\raisebox{1pt}{$\bullet$} \\\hline

      \multirow{10}{*}{\rotatebox{90}{\smtlib\ formulas\strut}}
      & \multicolumn{2}{c|}{}   & \sat\strut     & \bf 399 & 84   & 242  & 84   & 242  & 261  & 366  & 87   & 366  \\
      & \multicolumn{2}{c|}{\#} & \unsat\strut   & N/A     & 0    & N/A  & 0    & N/A  & 165  & N/A  & 0    & N/A  \\
      & \multicolumn{2}{c|}{}   & \unknown\strut & 870     & 1185 & 1027 & 1185 & 1027 & 843  & 903  & 1182 & 903  \\\cline{2-13}
      & \multirow{7}{*}{\rotatebox{90}{resolution time\strut}}
      &   \multirow{3}{*}{\rotatebox{90}{average\strut}}
      &     \sat\strut     &   0.67 &   0.00 & 163.84 &   0.00 & 161.58 &  13.22 &   9.25 &   0.00 &  24.18 \\
      & & & \unsat\strut   &    N/A &    N/A &    N/A &    N/A &    N/A &   0.02 &    N/A &    N/A &    N/A \\
      & & & \unknown\strut &   0.09 &   0.14 &   1.23 &   0.14 &   1.16 &   9.20 &   0.11 &   0.16 &   1.10 \\\cline{3-13}
      & & \multirow{3}{*}{\rotatebox{90}{total\strut}}
      &     \sat\strut     & 266.98 &   0.03 &  39\,650 &   0.03 &  38\,618 & 3\,450.0 & 3\,384.7 &   0.01 & 8\,849.3 \\
      & & & \unsat\strut   &    N/A &    N/A &      N/A &    N/A &      N/A &     3.73 &      N/A &    N/A &      N/A \\
      & & & \unknown\strut &  82.39 & 165.51 & 1\,069.4 & 165.46 & 1\,009.9 & 2\,981.2 &    96.94 & 192.64 &    88.60 \\\cline{3-13}
      & & \multicolumn{2}{c|}{all (\small with \timeout)\strut}
               & 349.37 & 165.54 & 194\,667 & 165.49 & 196\,934 & 270\,150 & 36\,480 & 192.74 &  41\,935 \\\hline\hline

      \multirow{10}{*}{\rotatebox{90}{\binsec\ formulas\strut}}
      & \multicolumn{2}{c|}{}   & \sat\strut     & \bf 1042& 951  & 954  & 951  & 954  & 953  & \bf 1042& 953  & \bf 1042\\
      & \multicolumn{2}{c|}{\#} & \unsat\strut   & N/A     & 62   & N/A  & 62   & N/A  & 319  & N/A     & 62   & N/A     \\
      & \multicolumn{2}{c|}{}   & \unknown\strut & 379     & 408  & 467  & 408  & 467  & 149  & 379     & 406  & 379     \\\cline{2-13}
      & \multirow{7}{*}{\rotatebox{90}{resolution time\strut}}
      &   \multirow{3}{*}{\rotatebox{90}{average\strut}}
      &     \sat\strut     &     1.09 &     0.34 &     3.42 &   0.36 &   3.35 &     0.06 &    10.95 &   0.07 &  11.13 \\
      & & & \unsat\strut   &      N/A &     0.01 &      N/A &   0.01 &    N/A &     0.04 &      N/A &   0.01 &    N/A \\
      & & & \unknown\strut &     0.04 &    15.59 &    31.01 &  15.59 &  31.67 &   191.61 &     0.02 &   0.17 &   0.02 \\\cline{3-13}
      & & \multirow{3}{*}{\rotatebox{90}{total\strut}}
      &     \sat\strut     & 1\,138.4 &   326.98 & 3\,266.3 &   341.75 & 3\,197.7 &    55.81 &  11\,408 &  62.41 & 11\,597 \\
      & & & \unsat\strut   &    N/A   &     0.61 &      N/A &     0.39 &      N/A &    14.17 &      N/A &   0.23 &     N/A \\
      & & & \unknown\strut &  13.78   & 5\,442.4 &  12\,591 & 5\,441.9 &  12\,875 &  28\,167 &     6.15 &  73.03 &    7.05 \\\cline{3-13}
      & & \multicolumn{2}{c|}{all (\small with \timeout)\strut}
               & 1152.17 & 64\,761 & 76\,811 & 64\,772 & 77\,009 & 30\,235 &  11\,415 & 135.67  & 11\,604 \\\hline

      \multicolumn{13}{c}{
        \begin{tabular}{c@{\qquad}c}
          solver\raisebox{1pt}{$\bullet$}: solver enhanced with our method &
          Z3$_{E}$: essentially E-matching\\
        \end{tabular}} \\
    \end{tabular}}
\end{table}

\todo{@BF : trop de N/A dans table, non ?}

\end{document}